\documentclass[letterpaper,twocolumn,10pt]{article}
\usepackage{usenix}

\usepackage{amsmath}
\usepackage{amsthm}
\usepackage{graphicx}		
\usepackage{array}
\usepackage{xfrac}  

\usepackage{thm-restate}
\usepackage{dblfloatfix}
\usepackage{etoolbox}

		\hypersetup{
			colorlinks=true,
			linkcolor=[rgb]{0,0,.5},
			urlcolor=[rgb]{0,0,.5},
			citecolor=[rgb]{0,0,.5},
			pdfstartview=FitH}

\usepackage[hypcap=true]{subcaption}

\usepackage{todonotes}
\usepackage{xurl}
\usepackage{enumerate}
\usepackage{amsfonts,bm,bbm,paralist,nicefrac,comment,float,thmtools,thm-restate,comment,multirow}
\usepackage[export]{adjustbox}
\usepackage{xspace}
\usepackage[inline]{enumitem}
\usepackage{algorithm}
\usepackage[noend]{algpseudocode}
\usepackage{eqparbox}
\usepackage{xparse}
\usepackage{cleveref}
\usepackage{booktabs}

\usepackage{tikz}
\usetikzlibrary{arrows.meta, positioning}
\usetikzlibrary{shapes.geometric}
\declaretheorem[name=Lemma]{lem}

\usepackage{colortbl}
\usepackage[table]{xcolor}

\usepackage{listings}
\newif\ifonlineversion
\onlineversiontrue

\crefname{section}{Section}{Sections}
	\crefformat{section}{#2Section~#1#3}
\crefname{subsection}{Section}{Sections}
	\crefformat{subsection}{#2Section~#1#3}
\crefname{appendix}{Appendix}{Appendices}
	\crefformat{appendix}{#2Appendix~#1#3}
\crefname{equation}{}{}

\crefname{figure}{Figure}{Figures}
	\crefformat{figure}{#2Figure~#1#3}
\crefname{theorem}{Theorem}{Theorems}
	\crefformat{theorem}{#2Theorem~#1#3}
\crefname{assumption}{Assumption}{Assumptions}
	\crefformat{assumption}{#2Assumption~#1#3}
\crefname{lemma}{Lemma}{Lemmas}
	\crefformat{lemma}{#2Lemma~#1#3}
\crefname{proposition}{Proposition}{Propositions}
	\crefformat{proposition}{#2Proposition~#1#3}
\crefname{corollary}{Corollary}{Corollaries}
	\crefformat{corollary}{#2Corollary~#1#3}
\crefname{observation}{Observation}{Observations}
	\crefformat{observation}{#2observation~#1#3}
\crefname{remark}{Remark}{Remarks}
	\crefformat{remark}{#2Remark~#1#3}
\crefname{example}{Example}{Examples}
	\crefformat{example}{#2Example~#1#3}
\crefname{table}{Table}{Tables}
	\crefformat{table}{#2Table~#1#3}
\crefname{definition}{Definition}{Definitions}
	\crefformat{definition}{#2Definition~#1#3}

\newcommand{\createappendix}[1]{\expandafter\newcommand\csname #1\endcsname{}}

\ifonlineversion
\newcommand{\addtoappendix}[4]{#3}
\else
\newcommand{\addtoappendix}[4]{\expandafter\appto\csname #1\endcsname{#2#3} #4}
\fi

\ifonlineversion
\newcommand{\printappendix}[1]{}
\else
\newcommand{\printappendix}[1]{\csname #1\endcsname}
\fi

\newcommand{\printsymbol}[1]{\csname #1\endcsname{}{}{}{} & \texttt{#1}}

\algrenewcommand{\algorithmiccomment}[1]{\hfill $\triangleright$ \eqparbox{COMMENT\thealgorithm}{\footnotesize{#1}}}
\algnewcommand{\LComment}[1]{\hfill $\triangleright$ \begin{minipage}[t]{\eqboxwidth{COMMENT\thealgorithm}}\footnotesize{#1}\strut\end{minipage}}
\algnewcommand{\algrule}[1][.2pt]{\par\vskip.5\baselineskip\hrule height #1\par\vskip.5\baselineskip}
\algnewcommand{\algdescription}[1]{\Statex \footnotesize{\textbf{Description: }#1}}
\algnewcommand{\algincentives}[1]{\Statex \footnotesize{\textbf{Incentive Structure: }#1}}

\renewcommand{\paragraph}[1]{\medskip\noindent\textbf{#1}}
\newcommand{\eat}[1]{}

\newtheorem{theorem}{Theorem}
\newtheorem{lemma}[lem]{Lemma}

\newtheorem{proposition}{Proposition}
\newtheorem{remark}{Remark}
\newtheorem{example}{Example}
\newtheorem{definition}{Definition}

\microtypecontext{spacing=nonfrench}
\begin{document}
\date{}
\title{\Large \bf Perils of Parallelism: Transaction Fee Mechanisms \\ under Execution Uncertainty}
\author{
{\rm Sarisht Wadhwa\thanks{Equal contribution.}}\\
Duke University
\and
{\rm Aviv Yaish\footnotemark[1]}\\
Yale University, IC3, \\Complexity Science Hub Vienna
\and
{\rm Fan Zhang}\\
Yale University, IC3
\and
{\rm Kartik Nayak}\\
Duke University
}

\maketitle

\begin{abstract}
Modern blockchains increasingly rely on parallel execution to improve throughput.
We show several industry and academic transaction fee mechanisms (TFMs) struggle to simultaneously account for execution parallelism while remaining performant and fair.
First, if parallelism affects fees, adversarial protocol manipulations that offset possible benefits to throughput by introducing \emph{fake} transactions become rational: users can insert functionally useless parallel transactions solely to reduce fees, and schedulers can create useless sequential transactions to increase revenue.
Execution contingency, a core feature of expressive programming languages, both exacerbates the aforementioned threats and introduces new ones:
(1) users may overpay for unused resources,
and (2) scheduler revenue is harmed when reserved scheduling slots go unused due to contingency.
We introduce a framework for this challenging setting, and prove an impossibility, highlighting an inherent tension: both parallelism and contingency involve a trade-off between minimizing risks for users and schedulers, as favoring one comes at the expense of the other.
To complete the picture, we introduce fee mechanisms and prove that they achieve the boundaries of this trade-off.
Our results provide rigorous foundations for evaluating designs advanced by notable blockchains, such as Sui and Monad.
\end{abstract}

\newcommand{\txswap}{\ensuremath{tx_{\mathsf{A}}}\xspace}
\newcommand{\opx}{\ensuremath{o_{\mathsf{px}}}\xspace}
\newcommand{\opool}{\ensuremath{o_{\mathsf{pool}}}\xspace}
\newcommand{\obal}{\ensuremath{o_{\mathsf{bal}}}\xspace}

\newcommand{\tx}{\ensuremath{{tx}}\xspace}
\newcommand{\txfee}{\ensuremath{{f}}\xspace}
\newcommand{\gasfee}{\ensuremath{g}\xspace}
\newcommand{\attainablefee}{\ensuremath{f_\mathrm{att}}\xspace}
\newcommand{\maxrecfee}{\ensuremath{r_\mathrm{max}}\xspace}
\newcommand{\minrecfee}{\ensuremath{r_\mathrm{min}}\xspace}
\newcommand{\basefee}{\ensuremath{f_\mathrm{base}}\xspace}
\newcommand{\uifee}{\ensuremath{f_\mathrm{ui}}\xspace}
\newcommand{\paidfee}{\ensuremath{f_\mathrm{act}}\xspace}
\newcommand{\receivedfee}{\ensuremath{r_\mathrm{act}}\xspace}
\newcommand{\burnratio}{\ensuremath{\gamma}\xspace}
\newcommand{\ins}{\text{ins}}
\newcommand{\object}{\ensuremath{o}\xspace}
\newcommand{\setobjects}{\ensuremath{\mathcal{O}}\xspace}
\newcommand{\numobjects}{\ensuremath{n_O}\xspace}
\newcommand{\writeset}{{\ensuremath{W}}\xspace}
\newcommand{\readset}{{\ensuremath{R}}\xspace}
\newcommand{\exec}{\ensuremath{\mathrm{Exec}}\xspace}
\newcommand{\outp}{\ensuremath{\mathrm{out}}\xspace}
\newcommand{\val}[1]{\ensuremath{\mathrm{val}(#1)}\xspace}
\newcommand{\conflicts}[2]{\ensuremath{C(#1, #2)}\xspace}
\newcommand{\gas}[1]{\ensuremath{\textsf{gas}_{#1}}}
\newcommand{\gaslimit}{\ensuremath{G}\xspace}
\newcommand{\state}{\ensuremath{s}\xspace}
\newcommand{\contingentread}{\ensuremath{{C_\readset}}\xspace}
\newcommand{\contingentreadobject}{\ensuremath{{c_\readset}}\xspace}
\newcommand{\contingentwrite}{\ensuremath{{C_\writeset}}\xspace}
\newcommand{\contingentwriteobject}{\ensuremath{{c_\writeset}}\xspace}
\newcommand{\used}{\ensuremath{X}\xspace}
\newcommand{\usedread}{\ensuremath{X_\readset}\xspace}
\newcommand{\usedwrite}{\ensuremath{X_\writeset}\xspace}
\newcommand{\schedule}{\ensuremath{\mathcal{S}}\xspace}
\newcommand{\consensusset}{\ensuremath{\mathcal{T}}\xspace}
\newcommand{\notused}{\ensuremath{\gamma}\xspace}
\newcommand{\makespan}[1]{\ensuremath{v(#1)}\xspace}
\newcommand{\greedy}{\ensuremath{\textsf{GREEDY}}\xspace}
\newcommand{\opt}{\ensuremath{\textsf{OPT}}\xspace}
\newcommand{\random}{\ensuremath{\textsf{RANDOM}}\xspace}
\newcommand{\compunits}{\ensuremath{t}\xspace}
\newcommand{\priorityschedules}{\ensuremath{P}\xspace}
\newcommand{\ratio}{\ensuremath{\alpha}\xspace}
\newcommand{\txcomplimit}{\ensuremath{T_{\textsf{max}}}\xspace}
\newcommand{\numcores}{\ensuremath{n}\xspace}
\newcommand{\accept}{\ensuremath{\textsf{ACCEPT}}\xspace}
\newcommand{\reject}{\ensuremath{\textsf{REJECT}}\xspace}
\newcommand{\userrisk}{\ensuremath{\mathrm{UR}}\xspace}
\newcommand{\schedrisk}{\ensuremath{\mathrm{SR}}\xspace}
\newcommand{\revenue}[1]{\ensuremath{R_{#1}}\xspace}
\newcommand{\classP}{\ensuremath{{\textbf{\emph{P}}}}\xspace}
\newcommand{\classNC}{\ensuremath{{\textbf{\emph{NC}}}}\xspace}
\newcommand{\schedutility}{\ensuremath{U_{\text{sched}}}\xspace}
\newcommand{\block}{\ensuremath{b}\xspace}
\newcommand{\usage}{\ensuremath{\mathcal{U}}\xspace}

\section{Introduction}
\label{sec:introduction}

As the throughput of consensus protocols continues to improve, the throughput bottleneck of blockchains has switched to execution \cite{ponnapalli2021rainblock}.
Parallel execution, i.e., running multiple transactions on parallel threads, is a prominent approach to increase execution throughput either used currently or slated for future upgrades by chains like Sui~\cite{suigas}, Solana~\cite{solanafoundation_2019}, Monad~\cite{Monad}, and Ethereum~\cite{AccessLists}.
However, in blockchains supporting Turing-complete smart contracts, resource contention among transactions cannot be decided without pre-execution.
As a result, contention may undermine parallelism's benefits.
E.g., if two \textit{interdependent} transactions are scheduled to execute in parallel, one of the two would either have to fail and re-execute later (as in Monad~\cite{Monad}), or stall and lock system objects until it can resume executing.
Transaction failure harms the system's user experience, while locking introduces overhead and forces serialization at contention points.

In both cases, execution is either slowed down or system's capacity is not fully used, defeating the purpose of parallelism.
Worse, current designs do not charge transactions for locking capacity that eventually goes unused, thereby harming revenue \cite{yaish2024speculative}.
A proper fee mechanism can alleviate these issues by pricing the externalities of contention and charging transactions that contribute significantly to the critical path.
Thus, a natural question arises: \emph{can we design a fee mechanism that accounts for the impact of transactions on parallel execution?}

To answer this, we must consider two ``perils'' that make the question hard: contingent transactions and shill transactions.

\paragraph{Peril 1: Contingent Transactions and Risk.}
Parallel execution amplifies the impact of \emph{contingent transactions}: transactions which can access different object lists based on the state when executing.
Such transactions create a tension between two forms of risk:
\emph{user risk}, where users overpay for objects that go unused, and
\emph{scheduler risk}, where execution capacity which is reserved ex-ante goes unpaid due to under-execution.
We formalize both notions and prove a central impossibility result: 
\emph{no mechanism can simultaneously eliminate both user risk and scheduler risk without either executing transactions in advance or collapsing to constant fees.}

\paragraph{Risk Allocation Mechanisms.}
Given this impossibility, we characterize a linear tradeoff between user and scheduler risk.
We study three canonical mechanisms:
(i) \emph{user-friendly}, which charges only for realized execution,
(ii) \emph{scheduler-friendly}, which charges for worst-case declared execution, and
(iii) \emph{Even-Steven}, which splits unrealized cost evenly.
We further analyze these mechanisms under three scheduler models: optimistic, pessimistic, and median, capturing different types of scheduler's utilities, when the execution of transactions is unknown.
Together, these results provide a principled design space for fee mechanisms under execution uncertainty.

\paragraph{Peril 2: Shill attacks on Gas Consumption Mechanisms.}
We build on the Gas Computation Mechanism (GCM) introduced by prior work~\cite{acilan2025transactionfeemarketdesign}.
A GCM assigns an abstract notion of \emph{effective compute} to each transaction given a schedule, while the TFM prices this compute based on user's priority.
Unlike prior work, we place this framework in an explicitly adversarial setting: both users and the scheduler are modeled as rational, economically motivated parties who may inject strategically crafted \emph{shill transactions}.

We identify two attack classes:
\emph{user shill attacks}, where fake transactions reduce the effective gas charged to a real transaction, and
\emph{scheduler shill attacks}, which inflate the gas charged to others.
We formalize \emph{shill-proofness} as a necessary security property for GCMs and show that it is fundamentally incompatible with the efficiency requirement that total gas equals the schedule makespan.
This tension explains why several natural GCM proposals admit profitable manipulation, and motivates fee-based notions of shill resistance that account for both pricing and execution outcomes.

We use one \emph{running example} throughout the paper. Alice sends a swap to an automated market maker that first reads an oracle price and only completes the trade (touching the pool and her balance) if the price is favorable; otherwise it reverts after the read. The set of objects it touches therefore depends on state it cannot see in advance, which makes it a contingent transaction. As a concrete user shill attack on this swap, suppose it touches a popular object and would normally be charged a high fee because it blocks other transactions from running in parallel. Alice can add a cheap, throwaway transaction that touches a \emph{different}, idle object. The mechanism now sees more ``parallel work'' in the block and spreads the cost out, so Alice's real transaction is charged less, even though her extra transaction did nothing useful. A scheduler shill attack is the mirror image: the scheduler adds its own fake transaction to make honest transactions look like they consume more, so it can collect more in fees.

\paragraph{Shill Attacks with Contingent Transactions.}
If the shill transaction does not have to pay the complete price for each object it accesses (i.e., scheduler risk is non-zero), then the cost to perform shill attacks reduces. Thus transaction contingency influences the shill proofness of a protocol. 

\paragraph{Object-Weighted Transaction Fee Mechanism.}
To translate gas consumption into payments under parallel execution, we introduce the \emph{Object-Weighted Transaction Fee Mechanism (OW-TFM)}.
Rather than charging purely based on serial execution time, OW-TFM prices a transaction according to the \emph{objects it declares in the parallel schedule}, including both compute time and contention over shared objects.
Intuitively, a transaction pays more not only when it runs longer, but also when it limits parallelism by using more objects.
Formally, OW-TFM prices each object based on how often the object has been used in previous blocks, in a EIP-1559-like~\cite{eip1559} price update function. 
OW-TFM then charges a fee equal to the sum of price of each object multiplied by the user-specified priority fee.
We show parameters under which this price function leads to a shill-proof system.

\paragraph{Contributions.}
In summary, we contribute the following:
\begin{enumerate}
\item We show that transactions with conditional execution can cause either users to overpay for computation objects that are never used, or the scheduler to lose revenue due to contingent transactions underexecuting and thus underpaying. We establish an inherent trade-off in fee design: any attempt to reduce user risk (overpayment) increases scheduler risk (uncollected fees), and vice versa.
\item We introduce a formal framework to model parallel transaction execution with contingency for analyzing TFMs.
\item We identify user-level and scheduler level \emph{shill attacks} where malicious party (user or scheduler) can add fake transactions to adversarialy decrease or increase fees.
\item We propose new transaction fee mechanism properties (\emph{shill proofness}) and parameters spanning the user-scheduler risk spectrum, and present a variant of weighted area TFM that can satisfy these properties.

\end{enumerate}

\paragraph{Outline.}
We begin by going over related work in \cref{sec:relatedwork}.
Then, we present our model in \cref{sec:model}.
In \cref{sec:contingenttx}, we analyze how contingency in transactions can cause either overpayment or loss in revenue depending on whether a user pays even if its transaction under-executes.
\cref{sec:newprotocol} maps the decision choices a designer must make when dealing with uncertain execution of transactions.
Independently, we observe that previous works developing TFMs dependent parallel execution are vulnerable to shill attacks (\cref{sec:shillattacksgcm}).
In \cref{sec:feeshills}, we see how contingent transactions make these attacks cheaper and easier to accomplish and define formal properties that every TFM design must satisfy.
In \cref{sec:tfm-design-space}, we look at object-weighted TFM (OW-TFM) which satisfies shill proofness.
We conclude with a discussion of design choices and future directions that can be taken in \cref{sec:discussions}.

\section{Related Work}
\label{sec:relatedwork}
\paragraph{Parallel Execution and Blockchain Scalability.}
The transition from serial to parallel execution is well-documented in recent systems research. Ponnapalli et al. identified execution as the primary bottleneck in modern blockchains, leading to the development of parallel engines~\cite{ponnapalli2021rainblock}. Prominent examples include Sui, which utilizes a fee-ordered DAG and object-based dependencies~\cite{suigas, sui2025object}, and Monad, which employs optimistic execution where conflicts trigger re-execution~\cite{Monad}. These systems rely on high-throughput consensus protocols like Mysticeti~\cite{babel2025mysticeti} and Bullshark~\cite{spiegelman2022bullshark} to order transactions rapidly before execution. While these systems improve throughput, they introduce new overheads regarding resource locking and contention that standard serial fee markets do not address.

\paragraph{Transaction Fee Mechanisms (TFMs).}
Traditional TFMs, such as the first-price auctions in Bitcoin and the base-fee burning mechanism of Ethereum's EIP-1559, treat execution resources as fungible and one-dimensional~\cite{gafni2024barriers,bahrani2024transaction, chung2023foundations,gafni2024discrete,garimidi2025transaction,roughgarden2024transaction,gafni2026transaction,gafni2026deterring}.
Recent literature has expanded this to \emph{multidimensional fee markets}, where distinct resources (e.g., storage and compute) are priced separately to avoid congestion in specific domains.
Diamandis et al. and Kiayias et al. explore the design of these multidimensional markets to better reflect resource scarcity~\cite{diamandis2023designing, kiayias2025one, acilan2025transactionfeemarketdesign}.
Our work builds on this by incorporating the execution time and parallelism into the pricing model, a necessity for the safe operation of parallel chains.

\paragraph{Gas Computation in Parallel Settings.}
Our work directly builds on the framework proposed by Acilan et al., which introduced Gas Computation Mechanisms (GCMs) like the Shapley and Time-Proportional Makespan (TPM) approaches~\cite{acilan2025transactionfeemarketdesign}. Their work established various properties which are good to have in GCM design, and propose various mechanisms to achieve (some of) the properties. We identify that these properties are insufficient to prevent \emph{Shill Attacks}. We demonstrate that their proposed mechanisms allow adversaries to manipulate gas costs by injecting \emph{fake  transactions}, that have no added utility from its execution.

\paragraph{Scheduling and Complexity.}
The problem of block building has historically been modeled as a Knapsack Problem, optimizing fee revenue under a gas limit~\cite{blockbuilding, mohan2024blockknapsack}.
In parallel environments, this transforms into a scheduling problem where transaction order dictates concurrency.
We analyze greedy heuristics used by systems like Sui, and random heuristics like Aptos and Monad, showing they are revenue-suboptimal compared to optimal parallel schedules. Furthermore, we ground our impossibility results in complexity theory, specifically the limits of parallel computation. We leverage standard results regarding the P-completeness of generic machine simulation to prove the hardness of verifying contingent object usage.

\section{Model}
\label{sec:model}
\paragraph{Transactions.}
A transaction $\tx(\compunits,\gasfee, \ins,\{\readset,\writeset\})$ is a (finite) sequence of instructions that interact with the blockchain state. The parameter $\compunits$ denotes the (upper bound on) compute units required to execute the transaction on a \emph{standard machine}.\footnote{Standard machine specifications, e.g., covering CPU and memory requirements, provide a fixed reference profile to measure $\compunits$.} Parameter $\gasfee$ denotes the price the user is willing to pay for all compute used. The instruction sequence ($\ins$) determines the control flow (conditionals, reads, writes) and how objects in the read and write sets (\readset, \writeset) are accessed. $\readset$ is the set of objects that the transactions accesses, but does not modify, whereas $\writeset$ represents the set of objects that it modifies. Since metadata beyond compute time and object access is often immaterial for scheduling, we represent transactions as
$
\tx \simeq (\compunits, \gasfee, \{\readset, \writeset\}). 
$
A transaction's declared read and write sets may be conservative: a transaction may declare \emph{contingent} objects that are only conditionally accessed at runtime. 

\paragraph{Schedule.}
A \emph{schedule} $\schedule$ is an ordered set of transactions that are to be executed in the given block, in which any parent of a transaction is executed strictly before the child. Typically, a schedule is limited by the amount of gas that all transactions in the schedule combined can consume. However, in this paper, we would constrain the schedule by the total compute consumed by all transactions (i.e., ignore other aspects of gas computation such as storage).
The schedule is generated from a publicly known algorithm, which we label as the \emph{scheduler}.

The revenue of the scheduler is the total \emph{fee} received from all the scheduled transactions. We represent the predicate that a transaction is in the schedule by $\tx \in \schedule$. We use the term prefix schedule to define a schedule until a particular transaction, i.e., $\schedule_{\tx}$, which represents the schedule with all transactions that must be executed before the transaction $\tx$.

\paragraph{State and Execution.}
Let $\state$ denote the global blockchain state (collection of object values) at a point in time. Given a schedule $\schedule$, let $
\state \xrightarrow{\schedule} \state'
$
denote the execution of the transactions in $\schedule$ (in order) on state $\state$, producing state $\state'$.
The \emph{execution outcome} of running a transaction $\tx$ on state $\state$ is written as
$\exec(\state,\tx) =\outp ,$
where $\outp$ is the result of executing $\tx$.

\paragraph{Objects.}
An object $\object$ can be thought of as data that each transaction reads, modifies, or creates on the blockchain. The set of all objects is represented as $\setobjects$, and $\numobjects = |\setobjects|$.
Each transaction $\tx(t,g,\{\readset,\writeset\})$ interacts with the objects in the sets \readset and \writeset. For all objects in \readset, the object remains unchanged after the interaction. For all objects in \writeset, the object can either be created or, if it already exists, then the value is modified. We represent the value of the object $\object$ by \val{\object}.
Sometimes a transaction might not access all the objects it asks for.

Objects have been used before by various blockchains, e.g., SUI Packages are smart contracts and functions (a type of read-only objects) while SUI Objects represent storage that each transaction can access \cite{sui2025object}.

We introduce a single \emph{running example} that we reuse throughout the paper to make each new idea concrete.

\begin{example}[Running Example: Alice's Swap]
Alice submits a transaction $\txswap$ to an automated market maker (AMM)~\cite{amm} which first \emph{reads an oracle price} object $\opx$.
If the price is favorable, it executes the swap, writing the pool reserves $\opool$ and Alice's balance $\obal$; if the price is unfavorable, it reverts right after reading the oracle and touches nothing else. So $\txswap$ declares the objects $\{\opx, \opool, \obal\}$, but only $\opx$ is \emph{always} used; $\opool$ and $\obal$ are \emph{contingent}, used only on the favorable branch. Throughout, we assume $\txswap$ uses one unit of compute, has priority $1$, and that $\opool$ is used thrice as much as other objects.
We call the favorable branch the \emph{desired} execution, and the revert the \emph{under-executed} one.
For this, we define contingency and some related terms below.
\end{example}

\paragraph{Used Sets.}
As a transaction's declared read/write sets may be conditional, we define the \emph{used} sets of $\tx$ when executed on a particular state (or after a prefix schedule) as
$\usedread(\state,\tx) \subseteq \readset(\tx)$
and
$\usedwrite(\state,\tx) \subseteq \writeset(\tx)$,
i.e., the subset of declared objects actually accessed during execution. When $\tx$ is run in the context of a schedule $\schedule$ applied to initial state $\state$, the used sets are dependent on the prefix $\schedule_{\tx}$  and we thus write $\usedread(\state,\schedule_{\tx},\tx)$ and $\usedwrite(\state,\schedule_{\tx},\tx)$ when clarity is required.

\paragraph{Contingent Objects.}
Intuitively, an object is contingent for $\tx$ if depending on prior transactions, the transaction conditionally accesses the object.

\begin{definition}[Contingent Object]
A declared object $\object\in \readset(\tx)\cup\writeset(\tx)$ is a \emph{contingent object} for $\tx$ from the perspective of some initial state $\state$ if there exist two legal prefix schedules drawn from the possible transaction sets $\schedule_{\tx}$ and $\schedule'_{\tx}$ such that
$\state \xrightarrow{\schedule_{\tx}} \state'$,
$\state \xrightarrow{\schedule'_{\tx}} \state''$,
where either the read used set differs (i.e., $\object\in\usedread(\state,\schedule_{\tx},\tx)$ but $\object\notin\usedread(\state,\schedule'_{\tx},\tx)$), or the write used set differs (meaning: $\object\in\usedwrite(\state,\schedule_{\tx},\tx)$ but $\object\notin\usedwrite(\state,\schedule'_{\tx},\tx)$).
\end{definition}

\paragraph{User Declarations.}
We consider the possibility of users declaring contigent object sets in their transactions, and denote these sets by
$\contingentread(\tx) \subseteq \readset(\tx)$ and $\contingentwrite(\tx) \subseteq \writeset(\tx)$.

\paragraph{Contingent Transactions.}
A transaction is contingent iff at least one of its declared objects is a contingent object, which can be formalized as follows.
\begin{definition}[Contingent Transaction]
A transaction $\tx$ is a \emph{contingent transaction} iff there exist two prefix schedules $\schedule_{\tx}$ and $\schedule'_{\tx}$ such that
$\state \xrightarrow{\schedule_{\tx}} \state', \state \xrightarrow{\schedule'_{\tx}} \state'',$
and the objects accessed during $\tx$'s execution on these two outcome states differ:
\[
\usedread(\state',\tx) \neq \usedread(\state'',\tx) \; \vee  \;\usedwrite(\state',\tx) \neq  \usedwrite(\state'',\tx)
\]
\end{definition}

Of these outcomes, one corresponds to the user's desired execution.
This outcome is assumed to use the contingent objects.
The other \emph{under-executed} outcome do not use the contingent objects and potentially uses less compute.
Note that contingency is about \emph{which} objects a transaction ends up touching, not \emph{what values} those objects hold.
A transaction that always reads and writes the same objects is \emph{not} contingent, even if an earlier transaction changed the values stored in those objects (e.g., copying a variable $a$ modified by a prior transaction).
It becomes contingent only when earlier transactions can change the set of objects it actually accesses.

\begin{example}
We now provide several examples of contingency.
\begin{itemize}
    \item A transaction that conditionally reads an oracle price and only writes balances if the price exceeds a threshold is contingent: the oracle object is a contingent object because prior transactions may have changed its value.
    \item A transaction that submits a swap to router which redirects the swap to the AMM that offers the best price is contingent.
    All AMMs' liquidity are contingent objects.
    However, in this transaction, the scheduler should expect the transaction not to access all objects.
    Such examples are out of scope and left as future work (see \cref{sec:discussions}).
    \item A pure transfer that reads only its own nonce and always writes the same set is non-contingent if no declared object can vary the control flow.
\end{itemize}
\end{example}

\paragraph{Transaction Conflicts.}
As in prior works, transactions $\tx_1$ and $\tx_2$ are said to \emph{conflict} if they access a common object and at least one access is a write. Formally,
\begin{align*}
\conflicts{\tx_1}{\tx_2}
\Longleftrightarrow
\{(\readset(\tx_1)\cap&\writeset(\tx_2)) \;\cup\; (\writeset(\tx_1)\cap\readset(\tx_2))\\
&\;\cup\;(\writeset(\tx_1)\cap\writeset(\tx_2))\} \neq \emptyset.
\end{align*}
If $\conflicts{\tx_1}{\tx_2}$ then $\tx_1$ and $\tx_2$ cannot be executed in parallel.
Contingency interacts with conflicts: if an object is contingent and ends up unused in a given execution, the conflict may not materialize for that run; thus the conflict relation may be \emph{execution-dependent} when contingent objects are present.

\paragraph{Fees and Effective Compute.}
Each transaction specifies a \emph{maximum fee} and a \emph{compute price} (price per unit of effective compute\footnote{Effective compute is equivalent to gas used, however, only the compute part of gas consumption is useful for us and thus in accordance to the multi-dimensional fee proposals, any other fee is separated.}).
While traditional serial blockchains equate effective compute with execution time ($\compunits$), parallel execution introduces a dependency: a transaction's cost should depend on its ability to run concurrently with the rest of the schedule.

\paragraph{Modeling Contingency.}
Contingent transactions introduce uncertainty: declared objects may not be used at runtime. To analyze the gap between declared and realized execution, we define the following definitions for fees for a transaction $\tx$ executed on state $\state$ under schedule $\schedule$. To cover fee burning, we also distinguish between the \emph{fee paid} by the user ($f$) and the \emph{fee received} by the scheduler ($r$). We use $f$ and $r$ as the generic paid and received fees; subscripts pick out specific reference points. In particular, $\attainablefee$, $\basefee$, and $\uifee$ (defined below) are hypothetical fees used for comparison, while $\paidfee$ and $\receivedfee$ are the \emph{actual} (realized) amounts paid by the user and received by the scheduler for a given execution.

\begin{itemize}
    \item \textbf{Attainable Fee ($\attainablefee$):} The fee $\tx$ would pay if it declared all contingent objects as definitively used. This represents the maximum theoretical cost (no uncertainty).
    \item \textbf{Baseline Fee ($\basefee$):} The fee $\tx$ would pay if it declared only its deterministic (guaranteed used) objects. This represents the minimum guaranteed cost.
    \item \textbf{User-Ideal Fee ($\uifee$):} The fee corresponding to exactly the objects $\tx$ actually uses during execution ($U_{used}$).
    \item \textbf{Actual Paid Fee ($\paidfee$) and Received Fee ($\receivedfee$):} The fee actually charged to the user and the portion received by the scheduler after burning, respectively.
\end{itemize}

We impose two monotonicity constraints on these fees:
\begin{enumerate}
    \item $\basefee \le \paidfee$: Users cannot reduce fees below the baseline cost of their guaranteed execution path.
    \item $\paidfee \le \attainablefee$: Declaring uncertainty (contingency) should not cost more than declaring certainty.
\end{enumerate}

\begin{example}[Running Example: The Four Fees for Alice's Swap]
Recall $\txswap$ declares $\{\opx,\opool,\obal\}$, say with object costs $1,3,1$.
The attainable fee treats all declared objects as used, so $\attainablefee = 1+3+1 = 5$.
The baseline fee counts only the always-used oracle, so $\basefee = 1$.
The user-ideal fee depends on the branch taken:
in the desired (favorable) execution Alice touches everything, so $\uifee = 5$;
in the under-executed (revert) branch only the oracle is touched, so $\uifee = 1$.
The interesting case is the under-executed one, where $\basefee = \uifee = 1$ sit far below $\attainablefee = 5$: the question of who pays for the $4$ units of unused, but reserved, capacity is discussed in \cref{sec:contingenttx}.
\end{example}

\paragraph{Scheduler Incentives and Burning.}
In mechanisms like EIP-1559~\cite{eip1559}, some of the paid fee can be burned.
We denote the \emph{retention ratio} as $\gamma = r/f$. To prevent schedulers from manipulating execution outcomes (e.g., forcing transactions to fail or under-execute to minimize work while collecting fees), we introduce a dynamic burning constraint: the retention ratio must increase as the transaction utilizes more of its declared objects. For simplicity, we assume that $\gamma$ is constant. 

This assumption aligns with practice.
For example, EIP-1559 defines for each block some fixed base fee which is burnt from each included transaction.

\subsection{Adversarial Model}
\label{sec:adversarial-model}

We model all parties as rational and economically motivated.  
There are two strategic roles: (i) \emph{users}, who submit transactions, and  
(ii) the \emph{scheduler}, which selects an execution schedule \(\schedule\).  
Consensus is assumed honest and only determines the candidate transaction set $\consensusset$; however a user or scheduler can insert arbitrary transactions in $\consensusset$.

We assume that all parties know the transaction fee mechanism (TFM), gas computation mechanism (GCM), and the risk-division rule that would be defined in~\cref{sec:newprotocol}.  
They can form beliefs about the outcome of contingent executions but cannot predict on-chain execution of transactions.  

\paragraph{User Actions.}
A user may:
(a) mis-declare contingent objects to influence \(\uifee\) or \(\attainablefee\);
and  
(b) submit \emph{shill transactions} which lower the total fee paid.
As a user's valuation $\val{\tx}$ is private, each one acts to maximize its expected utility \(U_{\text{user}} = \val{\tx} - \paidfee\) by minimizing $\paidfee$. 

\paragraph{Scheduler Actions.}
The scheduler may:
(a) insert shills if profitable, or  
(b) schedule transactions with favorable contingency risk.  
Its utility equals the total collected fees:
\(\schedutility = \sum_{\tx \in \schedule}(\receivedfee(\state,\schedule, \tx))\).
Since \receivedfee is a constant fraction of \paidfee, we instead consider the scheduler's objective in terms of \paidfee.
Thus, the scheduler's goal is to maximize the \emph{total} \paidfee summed over all transactions it includes in the schedule, rather than the fee of any single transaction in isolation.

\section{Contingent Transactions}
\label{sec:contingenttx}
We now turn our focus to \emph{contingent transactions}, that is, transactions that do not execute as expected.

\subsection{User Risk and Scheduler Risk Tradeoff}
In what follows, we prove that contingency poses a risk to users and schedulers: unless the scheduler has exponential compute, a protocol can never mitigate both types of risk at the same time.
To quantify the tradeoff, we define these risks.

The user risk due to contingency entails paying for unused declared objects, as we now formalize.

\begin{definition}[User Risk]
\label{def:userrisk}
For a transaction $\tx$ executed in schedule $\schedule$ from initial state $\state$, the \emph{user risk} is
\begin{equation}
\userrisk(\state,\schedule,\tx) \;:=\; \paidfee(\state,\schedule,\tx) - \uifee(\state, \schedule, \tx).
\end{equation}
That is, the user risk is the amount the user actually pays in excess of the user-ideal fee. 
\end{definition}

However, if contingent objects are not paid for, then contingent transactions do not pay their attainable value.
This reduces the scheduler's utility and thus also increases its risk.
As a reminder, paid fee is a proxy for the scheduler's utility, and this extends to the definition of risk.

\begin{definition}[Scheduler Risk]
\label{def:schedulerrisk}
For a transaction $\tx$ scheduled under $\schedule$ from initial state $\state$, the \emph{scheduler risk} is
\begin{equation}
    \schedrisk(\state,\schedule,\tx) \;:=\; \attainablefee(\state,\schedule,\tx) - \paidfee(\state,\schedule,\tx).
\end{equation}
Thus the scheduler risk measures the shortfall between the fee the scheduler expected in the best configuration and the fee actually collected.
\end{definition}

Our two risk metrics show the difference between protocols that make a user pay for all declared objects and those that require payment only for used objects.

\begin{lemma}[Risk Tradeoffs]
\label{lem:risk-tradeoff}
For any transaction $\tx$, initial state $\state$ and schedule $\schedule$, then the sum of $\userrisk$ and $\schedrisk$ is constant, independent of the execution of the transaction. 
\end{lemma}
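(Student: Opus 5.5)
The plan is to add the two definitions directly and let the actual paid fee cancel. By \cref{def:userrisk} and \cref{def:schedulerrisk}, for any $\tx$, $\state$ and $\schedule$,
\begin{align*}
\userrisk(\state,\schedule,\tx)+\schedrisk(\state,\schedule,\tx)
&= \bigl(\paidfee-\uifee\bigr)+\bigl(\attainablefee-\paidfee\bigr)\\
&= \attainablefee(\state,\schedule,\tx)-\uifee(\state,\schedule,\tx).
\end{align*}
Since $\paidfee$ no longer appears, the sum does not depend on how the mechanism splits the charge for unused capacity. This also gives the trade-off reading: any decrease in one risk is exactly matched by an increase in the other.

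It remains to show that this difference is fixed before execution. By definition, $\attainablefee$ prices every declared object as if it were definitively used. It therefore depends only on the declared sets $\readset(\tx),\writeset(\tx)$ and on the prices that $\schedule$ assigns to objects, not on which branch is taken.

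For $\uifee$, I would condition on the two outcomes singled out in the model, as illustrated in the running example. In the desired execution, $\uifee=\attainablefee$. In the under-executed branch, only the guaranteed objects are used, so $\uifee=\basefee$, which again depends only on the declaration. Hence, once $\tx$, $\state$ and $\schedule$ are given, $\attainablefee-\uifee$ is a quantity determined by the declared object sets and their prices. It is not determined by the mechanism's choice of $\paidfee$; for Alice's swap it equals $5-\uifee$ regardless of how the $4$ reserved units are split.

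The main obstacle is interpretive: pinning down what ``independent of the execution'' means. $\uifee$ does change between branches, so the sum is literally constant only across mechanisms, i.e.\ across choices of $\paidfee$ satisfying $\basefee\le\paidfee\le\attainablefee$, for a fixed realized outcome. I would state the lemma in that form, noting that $\state$ and $\schedule$ determine the realized branch, so for fixed $(\state,\schedule,\tx)$ the value $\attainablefee-\uifee$ is a fixed number.

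I would also record the consequence used later: both risks are nonnegative under the monotonicity constraints, given $\uifee\le\paidfee$ in the under-executed case. Zeroing one risk therefore forces the other to equal the full gap $\attainablefee-\uifee$, which is the linear trade-off.
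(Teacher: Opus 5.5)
Your proof is correct and is the same as the paper's. In both, adding Definitions~\ref{def:userrisk} and~\ref{def:schedulerrisk} cancels $\paidfee$ and leaves $\attainablefee-\uifee$, and your reading of ``constant'' (invariant across choices of $\paidfee$ for fixed $(\state,\schedule,\tx)$) is exactly how the paper uses the lemma. The branch-by-branch detour about $\uifee$, and the claim that the gap is ``fixed before execution,'' are unnecessary and slightly misleading, since $\uifee$ depends on the realized branch — as you yourself note afterwards.
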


\begin{proof}
By definition, $\userrisk+\schedrisk
= \paidfee - \uifee + \attainablefee - \paidfee$, so:
\[
\userrisk(\state,\schedule,\tx) + \schedrisk(\state,\schedule,\tx)
\;=\; \attainablefee(\state,\schedule,\tx) - \uifee(\state,\schedule,\tx).
\]
\end{proof}

By~\cref{lem:risk-tradeoff}, the sum of user and scheduler risks is constant. Thus, any protocol that reduces user risk must increase scheduler risk for that transaction (and vice-versa), unless the attainable fee is the same as the user ideal fee ($\attainablefee = \uifee$).

We note one subtlety: even though $\userrisk + \schedrisk$ is fixed, the \emph{combined} utility of the user and the scheduler is not. The reason is burning. Because the scheduler only keeps a fraction $\gamma$ of what the user pays (the rest is burned), money shifted from the user to the scheduler is not a clean one-to-one transfer. We make this dependence on $\gamma$ explicit when we analyze the mechanisms in \cref{sec:newprotocol,sec:feeshills}.

\begin{example}[Running Example: Splitting the Risk]
Take Alice's swap on its under-executed branch, where $\attainablefee = 5$ and $\uifee = 1$, so the total risk to divide is $\attainablefee - \uifee = 4$, regardless of how we price it. Writing $\paidfee = \alpha\,\attainablefee + (1-\alpha)\,\uifee$ for a knob $\alpha \in [0,1]$ (detailed in \cref{sec:newprotocol}), the split is: at $\alpha = 0$ Alice pays $\paidfee = 1$, so $\userrisk = 0$ and $\schedrisk = 4$ (the scheduler eats the wasted capacity); at $\alpha = 1$ Alice pays $\paidfee = 5$ for objects she never touched, so $\userrisk = 4$ and $\schedrisk = 0$; at $\alpha = \tfrac12$ she pays $\paidfee = 3$, splitting the risk evenly ($\userrisk = \schedrisk = 2$). In every case $\userrisk + \schedrisk = 4$: the trade-off only decides \emph{who} bears the $4$ units, never makes them disappear.
\end{example}

Intuitively, if the scheduler cannot determine the execution outcome (the used sets and observable outputs) of transactions when it forms a schedule, then it cannot simultaneously set the $\paidfee$ to be close to the user ideal fee ($\uifee$) and the maximum fee it can yield ($\attainablefee$). We formalize this intuition.

\begin{lemma}
\label{lem:cannot-determine-usage}
If the scheduler does not execute transactions while constructing the schedule, then in the general case it cannot decide whether some contingent transaction $\tx$ will access a contingent object $\object$ in its execution.
\end{lemma}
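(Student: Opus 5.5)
I plan to prove \cref{lem:cannot-determine-usage} by reduction from an undecidable (or, in the bounded-compute version the paper alludes to, inherently sequential) problem. The basic idea is that a contingent transaction can encode an arbitrary computation whose outcome determines whether it touches $\object$. Concretely, fix any program $M$ and input $x$. Since transactions are sequences of instructions in a Turing-complete language, I would build a transaction $\tx_{M,x}$ that declares $\object\in\writeset(\tx_{M,x})$. Its instructions first simulate $M$ on $x$ for at most $\compunits$ steps, and write to $\object$ if and only if $M$ accepts within that budget. To make $\object$ genuinely contingent in the sense of the Contingent Object definition, I would let the input $x$ be read from an object $\object'$. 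Then two prefix schedules that write different values into $\object'$ yield used sets that differ on $\object$.

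The key steps, in order, are as follows.
\begin{enumerate}
\item Formalize what ``deciding without executing'' means: the scheduler is given $\tx$, the state $\state$ and the prefix $\schedule_{\tx}$, and must output whether $\object\in\usedwrite(\state,\schedule_{\tx},\tx)$ using only analysis that does not carry out the instruction sequence itself.
\item Show that any correct procedure for this question yields a decision procedure for acceptance of $M$ on $x$. In the unbounded setting this is the halting problem. With the compute bound $\compunits$, it is the bounded acceptance problem, which is \classP-complete.
\item Conclude that a scheduler which could answer the question in general, without executing, would either decide an undecidable problem or, in the bounded case, place a \classP-complete problem in \classNC (e.g.\ via a fast parallel static analysis). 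This would collapse $\classP=\classNC$ and is believed impossible. The other option is that the scheduler effectively runs the simulation, which means executing the transaction.
\end{enumerate}

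The main obstacle is step 3 in the bounded setting, because with a finite $\compunits$ the question is trivially decidable by simulation. The lemma's content must therefore be that the only general method is essentially execution. I would handle this by making precise that ``not executing'' means being restricted to a weaker (highly parallel or polylogarithmic-depth) computation, and then invoking the standard \classP-completeness of generic machine simulation, as foreshadowed in the related-work section. For the unbounded interpretation I would simply cite Rice's theorem: whether a program ever writes $\object$ is a nontrivial semantic property, so it is undecidable. I would present both readings, with the \classP-completeness version as the one matching the paper's model.
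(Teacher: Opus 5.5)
Your reduction is sound, but it proves a different, computational version of the lemma rather than following the paper's proof.

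\textbf{The paper's route.} The proof is purely informational. By the definition of a contingent transaction there exist pre-states $\state'$ and $\state''$ on which $\tx$ touches different objects. Since the scheduler has not executed the prefix, it does not know which pre-state $\tx$ will run on, so it cannot know whether $\object$ is accessed. This argument is unconditional and needs no model of computation. It is also exactly the form consumed by \cref{thm:impossible-simultaneous-mitigation}: learning usage means executing transactions, which pushes the scheduler's work past $\gaslimit$. Its weak point is that it equates ``not computed'' with ``not determined'', even though $\state$ and $\schedule_{\tx}$ do fix the pre-state.

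\textbf{What your route buys.} Your argument closes that loophole. Even with the full state and prefix in hand, deciding $\object\in\usedwrite$ is as hard as bounded generic machine simulation. In the paper this is not the lemma's proof but the separate \cref{res:OUDPsComplete}, which reduces from GMSP much as you propose. The difference is that the paper uses one fixed universal-machine transaction and places $(\bar{T},x)$ in the state, which shows the hardness is driven by the state alone.

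\textbf{What it costs.} The bounded case holds only under $\classP\neq\classNC$, and only after you \emph{define} ``not executing'' as, say, polylog-depth computation. $\classP$-completeness does not force every sequential procedure to ``effectively run the simulation''. Nor does it give the quantitative compute lower bound that the downstream theorem uses.

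\textbf{Small repairs.}
\begin{enumerate}
\item The unbounded reading is moot, since every transaction in the model carries a compute bound $\compunits$. Rice's theorem also does not literally cover ``writes $\object$'', though your direct halting-style reduction would.
\item To keep hardness under the promise that $\object$ is genuinely contingent for every $M$, add a flag object whose values force or skip the write.
\item Keep the step bound in unary so the problem actually lands in $\classP$.
\end{enumerate}
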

The proof follows from the fact that if the schedule has not been executed, then the state $\state'$ at which the contingent transaction is executed cannot be distinguished from whether it uses its contingent object or not. Formally,
\begin{proof}
    Let $\state'$ and $\state''$ be states such that $\exec(\state',\tx) \neq \exec(\state'',\tx)$ (by the definition of contingent transactions, such states much exist). Since the state before execution is not determined, it is unknown which execution route the transaction would take, and thus it is unknown if object $\object$ is accessed.
\end{proof}

\begin{theorem}
\label{thm:impossible-simultaneous-mitigation}
Let $\gaslimit$ denote the block compute limit used to validate schedule feasibility.
If a scheduler is constrained to decide schedules with at most $\gaslimit$ compute available (i.e., cannot perform more compute than the schedule itself), then for a schedule containing at least two conflicting contingent transactions, it is impossible to simultaneously set payments so that $\userrisk(\cdot)=0$ and $\schedrisk(\cdot)=0$ for all possible transaction and state inputs, except in a constant fee structure where the fee paid is independent of the objects used.
\end{theorem}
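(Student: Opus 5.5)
We argue by contradiction. Suppose a payment rule achieves $\userrisk(\state,\schedule,\tx)=0$ and $\schedrisk(\state,\schedule,\tx)=0$ for every transaction, state and schedule. By \cref{def:userrisk} and \cref{def:schedulerrisk}, this forces $\paidfee=\uifee$ and $\paidfee=\attainablefee$ simultaneously. Hence $\uifee(\state,\schedule,\tx)=\attainablefee(\state,\schedule,\tx)$ on every input. This is the degenerate case already flagged after \cref{lem:risk-tradeoff}: the total risk $\attainablefee-\uifee$ must vanish identically.

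The proof then splits according to whether this equality is forced by the fee structure or only achieved by the scheduler's knowledge. In the first branch, consider a contingent transaction like the running example $\txswap$. Its desired and under-executed branches use different object sets. If $\uifee=\attainablefee$ on both branches, the fee attached to the used set equals the fee of the full declared set regardless of which objects are actually touched. In particular, the charge for the always-used set equals the charge for the full declared set. Applying this to arbitrary declared and contingent sets shows that the fee cannot depend on the objects used, which is exactly the constant fee structure excluded in the statement.

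In the second branch, the fee does depend on objects, so some contingent transaction has $\attainablefee\neq\uifee$ on its under-executed branch. To keep both risks zero, the scheduler must then price that transaction (and set the schedule around it) so that its payment equals $\uifee$ for the branch that will actually occur. This requires knowing, at scheduling time, whether the contingent object $\object$ will be accessed. By \cref{lem:cannot-determine-usage}, a scheduler that does not execute cannot decide this.

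To connect this to the compute bound $\gaslimit$, we use the hypothesis that the schedule contains two conflicting contingent transactions $\tx_1,\tx_2$. Here $\tx_2$'s used set depends on the state produced by $\tx_1$, and $\tx_1$'s own branch is undetermined. Deciding $\tx_2$'s usage therefore requires simulating $\tx_1$ and then $\tx_2$. The scheduler must also do this for every candidate ordering or schedule it compares, on top of the execution compute the block itself consumes. Its total work then exceeds the $\gaslimit$ budget, contradicting the constraint.

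The main obstacle is this last step: making rigorous the claim that determining usage costs more than $\gaslimit$, rather than merely ``some execution''. The first approach to try is a reduction: encode an arbitrary $\gaslimit$-bounded computation into $\tx_1$ and let $\tx_2$'s contingent branch depend on its output. Predicting the branch then amounts to generic machine simulation, which cannot be shortcut, as with P-complete problems. The scheduler would need the full $\gaslimit$ just to predict that one outcome, leaving no budget to also validate the schedule and any alternatives. If this resists formalization, the fallback is to interpret the hypothesis as ``the scheduler may not pre-execute'' and conclude directly from \cref{lem:cannot-determine-usage}.
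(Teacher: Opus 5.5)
Your proposal follows the paper's proof essentially step for step. Zero risks force $\paidfee=\uifee=\attainablefee$ and hence object-independent fees; otherwise \cref{lem:cannot-determine-usage} forces pre-execution, whose cost the paper likewise simply asserts exceeds $\gaslimit$ (the machine-simulation reduction you suggest is essentially what the paper supplies separately in \cref{res:OUDPsComplete}).

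One tightening, which applies equally to the paper: because you quantify over every schedule, your first step already gives $\uifee=\attainablefee$ on all inputs. So the first branch alone yields the dichotomy, and the premise of your second branch ($\attainablefee\neq\uifee$ on a realized branch) contradicts that standing assumption; the $\gaslimit$ argument is best read as explaining why a scheduler cannot avoid under-execution through ordering, not as a separate case.
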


\begin{proof}
Suppose that the scheduler generates a schedule $\schedule$ guaranteeing $\userrisk(\state,\schedule,\tx)=0$ and $\schedrisk(\state,\schedule,\tx)=0$ for every $\tx,\state$. By~\cref{lem:risk-tradeoff}, we would have for every $\tx$ that $\paidfee(\state,\schedule,\tx) = \uifee(\tx)$ and $\paidfee(\state,\schedule,\tx)=\attainablefee(\tx)$.
Therefore, $\attainablefee(\tx)=\uifee(\tx)$ for every $\tx$. 
This implies that a transaction declaring less objects ($\uifee$) pays the same fee as when declaring a super-set of the objects ($\attainablefee$).
Therefore, the fee paid is independent of the objects it accesses.  

To ensure both risks are zero, the scheduler needs to know which contingent objects will be used (to charge exactly the user ideal fee), and adjust its schedule accordingly.
By~\cref{lem:cannot-determine-usage}, this entails executing transactions. 

Thus, the scheduler has a $\gaslimit$ computation available to execute transactions and its goal is to compute a schedule within $\gaslimit$ computation limit.
To do so, the scheduler must at least execute each transaction in the schedule, which implies a computation greater than $\gaslimit$. 
\end{proof}

\begin{remark}
    \cref{thm:impossible-simultaneous-mitigation} shows that when faced with conflicting contingent transactions, simultaneously eliminating user and scheduler risk is only possible in three cases.
    The first is the ``trivial'' solution: Only include non-conflicting transactions and thus avoid any contingency.
    The second is the constant fee structure, i.e., all transactions pay the same fee. Lastly, we can burn all the revenue that the scheduler makes, while requiring the user to pay only for the objects it uses.
\end{remark}

\subsection{Computational Impossibility}
Our impossibility is an \emph{economic} statement: under the budget $\gaslimit$, the scheduler cannot zero out both risks at once.
The natural follow-up question is \emph{why} the scheduler cannot spend some extra effort to figure out which contingent objects get used and price them exactly.
The rest of this section answers that with a \emph{computational} argument: deciding object usage is essentially as hard as running the transaction itself, so there is no cheap shortcut.
This is what makes the trade-off fundamental rather than an artifact of a weak scheduler, since no scheduler can predict usage without execution.

\paragraph{Object Usage Machine and Complexity.}
To make the hardness statement precise, we prove that there are no ``short-cuts'' to verifying contingency in general case.
Put differently, asserting whether an object declared by a transaction is used is sequential, when considering a general transaction specified in a Turing-complete language.
We proceed with the required technicalities.
\cref{def:OUM} formalizes an abstract machine that determines whether a transaction uses a given object.
\begin{definition}[Object Usage Machine (OUM)]
    \label{def:OUM}
    Given transaction $\tx$, an initial state $\state$, an object $\object$, and a gas limit $\gaslimit$, the \emph{Object Usage Machine} executes $\tx$ on $\state$, and, if $\object$ is accessed before $\gaslimit$ gas is used, then the machine halts and outputs $\accept$.
    Otherwise, the machine halts when the gas budget is exhausted and outputs $\reject$.
\end{definition}

We define a natural decision problem using \cref{def:OUM}.
\begin{definition}[Object Usage Decision Problem (OUDP)]
\label{def:OUDP}
Given $\left(\object,\tx,\state,\gaslimit\right)$ where the gas limit $\gaslimit$ is specified in unary, does the OUM output $\accept$?
\end{definition}

If a method to \emph{efficiently} decide \cref{def:OUDP} on any general input exists, then it could be utilized by schedulers to quickly verify contingency.
To understand what efficiency means in our case, note that for the performant protocols considered, block-time is strictly lower than the highest possible block execution time given the protocol's per-block gas limit and its reference hardware specifications.
Thus, the minimal efficiency requirement entails \emph{at least} verifying contingency in sub-polynomial time with respect to gas.

However, we show in \cref{res:OUDPsComplete} that OUDP is $\classP$-complete.
Under widely believed complexity class separations \cite{greenlaw1995limits}, this implies that deciding a general transaction's object usage at execution step $t$ requires sequential work proportional to $t$.
In turn, this means the user-scheduler risk trade-off is unavoidable if transactions can execute Turing-complete code.

\begin{theorem}[$\classP$-completeness of OUDP]
\label{res:OUDPsComplete}
The Object Usage Decision Problem (OUDP) is $\classP$-complete.
\end{theorem}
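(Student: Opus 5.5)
We split the claim into membership in $\classP$ and $\classP$-hardness under log-space (or $\classNC$) reductions, the standard notion for $\classP$-completeness~\cite{greenlaw1995limits}.

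\textbf{Membership.} The OUM itself gives the algorithm. Given $(\object,\tx,\state,\gaslimit)$, simulate execution of $\tx$ on $\state$ one instruction at a time. Each instruction consumes at least one unit of gas, so at most $\gaslimit$ steps are simulated. Before each step, check whether the step accesses $\object$; if it does, output $\accept$. Once the gas is exhausted, output $\reject$.

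Because $\gaslimit$ is written in unary, the input length is at least $\gaslimit$. We assume a standard machine in which a single instruction touches polynomially many state cells. Each simulated step then costs polynomial time in the input size, so the whole simulation runs in polynomial time. The unary encoding matters here: with a binary $\gaslimit$ the problem would be $\mathsf{EXP}$-hard, which is why \cref{def:OUDP} insists on unary.

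\textbf{Hardness.} We reduce from the generic $\classP$-complete problem: given a deterministic Turing machine $M$, an input $x$, and a time bound $1^T$, does $M$ accept $x$ within $T$ steps? Equivalently, we could reduce from the Circuit Value Problem. The reduction proceeds in four steps.

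\begin{enumerate}
\item Encode $M$ and $x$ into a transaction $\tx_{M}$ together with an initial state $\state$. The state holds the tape contents as objects or data in a designated working object. The instructions of $\tx_M$ form a fixed interpreter loop that repeatedly reads the current state and head symbol, applies $M$'s transition table, and writes the result back.
\item Add one fresh object $\object^\ast$ to $\tx_M$'s declared read set. The interpreter reads $\object^\ast$ only if it enters $M$'s accepting state, so $\object^\ast$ is a contingent object.
\item Set $\gaslimit = c\cdot T + d$, where $c$ bounds the gas used to simulate one step of $M$ and $d$ covers setup and the final access to $\object^\ast$.
\item Check correctness: the OUM accepts $(\object^\ast,\tx_M,\state,1^{\gaslimit})$ if and only if $M$ accepts $x$ within $T$ steps.
\end{enumerate}

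The reduction is log-space computable. The interpreter code is fixed, the transition table is copied verbatim, and $x$ is copied into $\state$. Writing out $1^{\gaslimit}$ only requires a counter of $O(\log T)$ bits.

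\textbf{Main obstacle.} The hardest part is making the step-overhead constant $c$ honest. We must argue that the transaction language, which is Turing-complete with bounded per-instruction gas, can simulate one step of $M$ with $O(1)$ gas. Otherwise $\gaslimit$ might need to be superpolynomial, or dependent on $T$ in a way the log-space reducer cannot compute.

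Our first approach is to use the Circuit Value Problem instead of Turing machines. Given a topologically ordered circuit, the transaction is straight-line code with one instruction per gate, so $O(1)$ gas per gate is immediate. The final instruction reads $\object^\ast$ exactly when the output gate evaluates to $1$, and we set $\gaslimit = O(\text{circuit size})$. This avoids any discussion of tape addressing.

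By the standard separation $\classNC \neq \classP$, it follows that no efficient parallel shortcut exists for deciding object usage.
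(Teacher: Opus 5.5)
Your proof is correct. The membership half matches the paper's: step-by-step simulation, which is polynomial because $\gaslimit$ is unary. You also make explicit an assumption the paper leaves implicit, namely that every instruction costs at least one unit of gas.

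For hardness, the paper reduces from the Generic Machine Simulation Problem using a \emph{fixed}, constant-length transaction that implements a universal machine. It places the instance $x\#\bar{T}\#^{p(l)}$ in the state $\state$ and derives $\gaslimit$ from an assumed per-step bound $g_1$. That is essentially your first sketch. You instead settle on the Circuit Value Problem, emitting instance-dependent straight-line code with $O(1)$ gas per gate.

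Your route makes the gas accounting and the log-space computability of the reduction immediate. The paper glosses over both, simply assuming $g_1$ and calling the reduction ``efficient''. The paper's route, however, proves something stronger and closer to its narrative: object usage is already $\classP$-hard for a single fixed contract, with only the state (which prior transactions can shape) varying. In your version the transaction code grows with the instance. To recover the fixed-contract form, put the circuit into $\state$ and run a fixed gate-evaluation loop, which still costs $O(1)$ gas per gate in a RAM-style VM.

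Two small remarks. First, your worry that the per-step overhead must be $O(1)$ is stronger than necessary. Any polynomial per-step bound the reducer can write down keeps $\gaslimit$ polynomial, and $1^{\gaslimit}$ remains log-space writable. Second, $\classNC \neq \classP$ is a widely believed conjecture, not a ``standard separation''. Your closing consequence about parallel shortcuts is therefore conditional, as the paper phrases it.
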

\begin{proof}
To show completeness, we need to prove that OUDP is in $\classP$ and that it is $\classP$-hard.
First, recall the definition of $\classP$.
\begin{definition}[Class $\classP$]
    The class $\classP$ equals the set of all decision problems (i.e., that produce a single output bit for any input) solvable by a deterministic Turing machine in polynomial time in input length.
\end{definition}

Now, we can prove that indeed OUDP is in $\classP$.
Given an instance $\left(\object,\tx,\state,\gaslimit\right)$ where the gas limit $\gaslimit$ is specified in unary, then one can simulate the execution of $tx$ on the OUM step-by-step until reaching the limit.
This takes polynomial time in the input size and thus that OUDP is in $\classP$, as $\gaslimit$ is encoded in unary.
Note that unary encoding is the literature's standard approach for such reductions \cite{greenlaw1995limits}.
Intuitively, that is to ensure that the running time permitted by $\gaslimit$ is polynomial in input length.
To see why, consider an input $\left(\object,\tx,\state,\gaslimit\right)$.
As $\gaslimit$ is a sub-string of this input, then the input's length is greater than or equal to the length of $\gaslimit$.
Given a binary encoding, then even the simplest step-by-step execution for $\gaslimit$ steps would formally require $O\left(2^\gaslimit\right)$ time, exponential in input length.
    
We continue by showing that OUDP is $\classP$-hard.
To do so, we reduce from the Generic Machine Simulation Problem (given in \cref{def:GMSP}), well-known to be $\classP$-complete \cite{greenlaw1995limits}.

\begin{definition}[Generic Machine Simulation Problem (GMSP)]
    \label{def:GMSP}
    Given a description $\bar{T}$ of a Turing machine $T$, a unary-encoded integer $n$, and an input string $x$, does $T$ accept $x$ within $n$ steps?
\end{definition}

Given a GMSP instance $\left(\bar{T}, n, x\right)$, our reduction proceeds by constructing an OUDP instance $\left(\object,\tx,\state,\gaslimit\right)$, thus tying OUDP and GMSP together.
Intuitively, we nest a universal Turing machine to simulate $T$ on top of the blockchain's VM.

A classic complexity theory result shows that there exists a universal Turing machine that can simulate the execution of any arbitrary machine \cite{greenlaw1995limits}.
Thus, set $\tx$ to equal the universal machine's implementation in the blockchain's VM that performs $n$ steps, and which, upon reaching an accepting state, accesses a fixed object $\object$ not involved in the implementation.
Note that the length of $\tx$ is constant and independent of the GMSP instance.
Next, we set the state $\state$ to equal the literature's canonical representation of GMSP instances as used for reductions, defined as follows.
Let $l$ be the input's length, $\#$ be a delimiter character which cannot be present in input strings, $p(l)$ be an upper bound on the running time of $T$ such that $p(l) = 2^{k_1 \lceil \log \left(l+1\right)\rceil}$ where $k_1$ is some integer power of two (traditionally chosen so that the encoding equals a turned-on bit followed by a trail of zeros, and thus is computable by bit-shifts \cite{greenlaw1995limits}), and define $\#^{p(l)}$ as the delimiter concatenated to itself $p(l)$ times.
Then, the canonical representation is $x\#\bar{T}\#^{p\left(l\right)}$.
Finally, given that $g_1$ is the maximal gas consumed to execute one step of the machine $T$ using the fixed machine encoded in $\tx$ and that $g_2$ is the gas needed to access $\object$, set $\gaslimit$ to $2^{k_2 \lceil \log \left(\left(g_1 \cdot n\right) + g_2\right) \rceil}$ for some integer power of two $k_2$.

In total, this is an efficient reduction such that $\tx$ accesses $\object$ within the limit $\gaslimit$ if and only if $T$ accepts $x$ within $n$ steps.
Combined with the $\classP$-completeness of GMSP, we get that OUDP is $\classP$-hard.
In turn, as OUDP is both in $\classP$ and $\classP$-hard, we conclude it is $\classP$-complete.
\end{proof}

\begin{remark}
    Our results show that the risk tradeoff is ex ante: the scheduler must reserve capacity and price uncertainty \emph{before} execution.
    This issue cannot be mitigated using ex post optimizations like pruning unused contingent objects, which require execution.
    Indeed, Ethereum's recent block-level access lists proposal makes a similar distinction \cite{AccessLists}.
    Thus, while the proposal requires recording touched objects when executing transactions at block creation time (to make later verification amenable to parallelization), it still emphasizes the possibility of adversarial ``phantom'' over-declarations.
\end{remark}

\section{Mapping the Scheduler Design Space}
\label{sec:newprotocol}
Given the fundamental impossibility of simultaneously minimizing both user risk (\userrisk) and scheduler risk (\schedrisk) when dealing with contingent transactions, we explore three fee division mechanisms that allocate this risk differently between the two parties.
Two of these correspond to the theoretical boundaries, that is, assign all risk to either users or schedulers, while the third balances risk equally.

\subsection{Risk Division}
\label{sec:riskdivision}
We present the \emph{user-friendly}, \emph{scheduler-friendly}, and \emph{Even-Steven} risk division mechanisms.
Each defines how paid fees (\paidfee) relate to attainable (\attainablefee) and user-ideal (\uifee) fees.

\subsubsection{User-Friendly Mechanism}

Under the \emph{user-friendly} fee division mechanism, users pay for exactly what they use, thus eliminating user risk entirely.
That is if the contingent transaction under-executes, then the user pays a fee that a transaction expecting this execution would have paid.
Formally, the paid fee equals the user-ideal fee: $\paidfee = \uifee$.
This implies that the scheduler absorbs all uncertainty associated with contingent execution, leading to
\[
    \schedrisk = \attainablefee - \paidfee = \attainablefee - \uifee, \quad \text{and} \quad \userrisk = 0.
\]
Intuitively, this is a user-oriented design: one only pays for the part of the transaction that actually executes.
But, the scheduler bears all risk, as its realized revenue decreases whenever transactions under-execute.

Using our formal notations, if the transaction executes completely, then $\uifee = \attainablefee$, and the scheduler receives the maximum possible fee.
When a transaction under-executes, however, $\uifee$ may decrease to the baseline fee $\basefee$.
Such a design is commonly seen in Ethereum-esque fee mechanisms, where parallelism is not a factor.

\subsubsection{Scheduler-Friendly Mechanism}
The \emph{scheduler-friendly} fee division lies at the opposite extreme: the scheduler bears no risk and receives the attainable fee regardless of execution outcomes: $\paidfee = \attainablefee$. 
This implies $\schedrisk = 0$ and $\userrisk = \paidfee - \uifee = \attainablefee - \uifee$.
Under this mechanism, the user must pay the full attainable fee \emph{even} if part of its transaction does not execute as intended.
The scheduler's revenue is thus insulated from contingency-related risks.
Such a design is proposed for chains like Monad~\cite{MonadGas}.

Scheduler-friendliness can be interpreted as an \emph{execution-agnostic} pricing rule: users always pay the highest possible fee and are not reimbursed post-execution for unused objects.
This guarantees revenue certainty for the scheduler, but exposes users to potential overpayment when contingent transactions do not make use of all declared objects.

\subsubsection{The Even-Steven Mechanism}

Between these two extremes lies the \emph{Even-Steven} mechanism, which evenly splits the risk between the user and the scheduler.
Under our notation, this can be formalized as requiring:
$\paidfee - \uifee = \attainablefee - \paidfee$.
Thus, solving for the paid fee yields:
\begin{equation}
\label{eq:evensteven}
    \paidfee = \frac{\attainablefee + \uifee}{2}.
\end{equation}
This equal-risk division ensures that the user and scheduler each bear half the uncertainty implied by contingency.

The Even-Steven mechanism can be viewed as a compromise: the user pays for half of the unrealized execution of the transaction.
If a transaction completely fails to execute, the user still pays halfway between the attainable fee and the user-ideal fee.
Conversely, if the transaction executes successfully, both user and scheduler pay the optimal amount.

\subsubsection{Generalized Representation}

The three mechanisms can be represented as points on a linear spectrum parameterized by $\alpha \in [0,1]$:
$\paidfee = \alpha \attainablefee + (1-\alpha)\uifee$.
Here, $\alpha$ encodes how much risk is ``transferred'' to the user:
\begin{itemize}
    \item $\alpha = 0$: \textbf{User-Friendly} (no user risk),
    \item $\alpha = \frac{1}{2}$: \textbf{Even-Steven} (equal risk),
    \item $\alpha = 1$: \textbf{Scheduler-Friendly} (no scheduler risk).
\end{itemize}

\subsection{Revenue Maximization Functions}
\label{sec:revenuescheduler}

The aim for each scheduler is to maximize the utility generated given contingent transactions.
For this, a scheduler can choose among utility functions that optimize different aspects of revenue, each operating based on a different expectation regarding the successful execution of contingent transactions.

The first scheduler maximizes revenue under the assumption of the most favorable possible execution outcome, utilizing the highest potential fee achievable.

\begin{definition}[Optimistic Scheduler]
    The \emph{Optimistic Scheduler} optimizes the best possible revenue under the assumption that \emph{all transactions successfully execute}.
\end{definition}

In the context of fees, if a transaction executes successfully and uses all declared objects, the user-ideal fee paid approaches the attainable fee ($\uifee = \attainablefee$). The optimistic scheduler maximizes revenue based on the expectation that the paid fee ($\paidfee$) equals the attainable fee ($\attainablefee$) for all scheduled transactions, thus optimizing based on the upper bound of potential income.
As scheduler expects $\uifee = \attainablefee$, all risk divisions lead to the same utility, $\paidfee^{s} = \attainablefee$.

The second scheduler minimizes risk exposure by planning for the worst-case revenue scenario.

\begin{definition} [Pessimistic Scheduler]
    The \emph{Pessimistic Scheduler} assumes the worst possible execution for all contingent transactions and \emph{optimizes the worst possible revenue}.
\end{definition}

A user would pay $\basefee$ for submitting a deterministic, non-contingent transaction with the same minimum computational and always-used object requirements. This fee $\basefee$ is a lower bound, since $\basefee \le \uifee$. Thus, for a scheduler expecting the worst possible fee from the transaction $\paidfee^{s} = \basefee$. 

From eq.~\cref{eq:evensteven}, the worst-case fee paid under the Even-Steven mechanism is $\paidfee = \frac{\attainablefee+\basefee}{2}$.
The pessimistic scheduler therefore optimizes revenue based on this minimum guaranteed fee that would be received even the transaction under-executes.

The third scheduler utilizes a probabilistic assumption to balance optimism and pessimism regarding contingency.

\begin{definition}[Median Scheduler]
    The \emph{Median Scheduler} is defined as one that assumes that half the contingent transactions will under-execute. This implies that, \emph{in expectation, it receives halfway between the potential revenue and minimum revenue} from the contingent transaction.
\end{definition}

Given the assumptions of this scheduler, it would expect the paid fee to equal
$
    \paidfee = \frac{3\attainablefee+\basefee}{4}
    .
$

We provide a summary of the different combinations of revenue maximization functions and risk divisions in \cref{tab:scheduler-utility}.
\begin{table}[H]
    \centering
    \caption{Scheduler utility under different risk divisions and execution outcome expectations.}
    \begin{tabular}{c c c c}
\toprule
& Optimistic &Pessimistic & Median\\
\midrule
     User-Friendly & $\attainablefee$ & $\basefee$ & $\frac{\attainablefee+\basefee}{2}$ \\
     Scheduler-Friendly&$\attainablefee$ &$\attainablefee$&\attainablefee\\
     Even-Steven &\attainablefee &$\frac{\attainablefee+\basefee}{2}$&$\frac{3\attainablefee+\basefee}{4}$\\
     \bottomrule
\end{tabular}
    \label{tab:scheduler-utility}
\end{table}

\section{Shill Attacks on GCMs}
\label{sec:shillattacksgcm}
We now analyze shill attacks on the GCMs presented in the literature and introduce properties required for a mechanism to be robust to these manipulations.
In such attacks, \emph{fake} transactions are introduced, either by the scheduler to increase the fee paid by users, or by users to reduce their expenses.

The specific attacks that we analyze target parallel-execution blockchains.
Previous works like~\cite{acilan2025transactionfeemarketdesign,diamandis2023designing,lavee2025does,kiayias2025one} note accurately pricing transactions requires considering that while two transactions' computation time might be the same, one may access more \emph{in-demand} objects than the other. 
In particular, Acilan et al.~\cite{acilan2025transactionfeemarketdesign} describe a set of important properties and suggest various GCM designs that satisfy subsets of these properties. 
We believe one property that was not considered, but should be of the highest importance, is shill-proofness.

As GCMs were first put forth by Acilan et al.~\cite{acilan2025transactionfeemarketdesign}, we follow their assumption that the fee paid per gas unit is the same for all transactions until otherwise specified, and further assume in this section that user payments are fully received by the scheduler. 
Shills are priced exactly like any honest transaction and must pay whatever the mechanism at hand prescribes, including any burned portion.
An attack is therefore only worthwhile when the gas (or fee) saved on the target transaction outweighs what the adversary must pay for shills, and such considerations are fully accounted for in our analysis.

\subsection{Attacking Existing GCMs}
We begin by demonstrating how shill attacks can be applied to two of the GCMs proposed by \cite{acilan2025transactionfeemarketdesign}. 
Let $\consensusset$ be the set of transactions scheduled, $t(\tx)$ the execution time of $\tx$, $O$ the set of objects accessed (analogous to the notion of storage keys considered by \cite{acilan2025transactionfeemarketdesign}), and $\makespan{S}$ the least amount of time taken to execute a subset $S \subseteq \consensusset$. 

\begin{definition}[Shapley GCM] Allocates gas used via Shapley value (marginal contributions to the makespan averaged over all subsets), that is,
    \[
        \gas{\consensusset}^{\text{Shapley}}(\tx) = \sum_{S \subseteq \consensusset \setminus \{\tx\}} 
        \frac{ \makespan{S \cup \{\tx\}} - \makespan{S}}{|\consensusset|\binom{|\consensusset|-1}{|S|}}
        .
    \]
    
\end{definition}

\begin{definition}[Time-Proportional Makespan (TPM) GCM]
    Assigns gas used proportional to the share of execution time within the block's makespan, i.e.,
    \[
        \gas{\consensusset}^{\text{TPM}}(\tx) = 
        \frac{ t(\tx) }{ \sum_{\tx' \in \consensusset} t(\tx') } \cdot \makespan{\consensusset}
        .
    \]
\end{definition}

In the two examples below, we assume $n \geq 2$ parallel threads and the optimal lock-based scheduler. Each transaction is written as $\tx \simeq (t, O)$, where $t$ is its execution time and $O$ is the set of objects it accesses.

\begin{example}[User Shill Attack on Shapley GCM]
    Let the set of other transactions be $\{\tx_1,\tx_2,\tx_3\}$ with $\tx_1 \simeq (1,\{\object_1\})$, $\tx_2 \simeq (1,\{\object_1\})$, $\tx_3 \simeq(2,\{\object_2\})$.
    A user's \emph{original} transaction is $\tx_4\simeq(3,\{\object_1\})$. $\consensusset = \{\tx_1,\tx_2,\tx_3,\tx_4\}$

    \paragraph{Baseline.}
    Under the Shapley GCM,
    $
        \gas{\consensusset}^{\text{Shapley}}(\tx_4) = \frac{8}{3}
        .
    $

    \paragraph{User Shill Attack.}
    If the user also submits a \emph{fake} $\tx_5\simeq(1,\{\object_2\})$, then:
    $
      \gas{\consensusset\cup\{\tx_5\}}^{\text{Shapley}}(\tx_4) = \frac{137}{60},\gas{\consensusset\cup\{\tx_5\}}^{\text{Shapley}}(\tx_5) = \frac{11}{30}
      .
    $
    Hence, the user's \emph{total} charge is $
      \frac{137}{60}+\frac{11}{30} = \frac{53}{20} < \frac{8}{3},
    $
    Thus, by sending two transactions, the user reduced the gas used.
\end{example}

\begin{example}[Scheduler Shill Attack on TPM GCM]
    Let the set of other transactions be $\tx_1 \simeq (6,\{\object_1\})$, $\tx_2 \simeq (6,\{\object_2\})$.

    \paragraph{Baseline.}
    Under the TPM GCM we have that the gas of both transactions is 3, i.e., $\gas{\consensusset}^{\text{TPM}}(\tx_1) = 3$ and $\gas{\consensusset}^{\text{TPM}}(\tx_2) = 3$.

    \paragraph{Scheduler Shill Attack.}
    If the scheduler submits a \emph{fake} $\tx_3\simeq(6,\{\object_1, \object_2\})$, then
    $
        \gas{\consensusset}^{\text{TPM}}(\tx_1) = \gas{\consensusset}^{\text{TPM}}(\tx_2) = \gas{\consensusset}^{\text{TPM}}(\tx_3) = 4
        .
    $
    Hence, the total gas received by the scheduler becomes $12-4 = 8 > 6$.
    Thus, by sending a fake transaction, the scheduler earns more gas. 
\end{example}

\subsection{Shill-proofness}
Taking a step back, we look at why these attacks happened and what properties are required in addition to those mentioned in~\cite{acilan2025transactionfeemarketdesign}.
For a user shill attack, the user was able to insert a fake transaction (or transactions) to reduce the gas used strategically. Denoting it in mathematical terms, if the set of transactions is $\consensusset$, and the user's transaction is $\tx_1 \in \consensusset$, then if there exists a $\consensusset'$, such that
\[
        \gas{\consensusset}(\tx_1) > \gas{\consensusset\cup \consensusset'}(\tx_1) + \gas{\consensusset\cup \consensusset'}(\consensusset'),
\]
then, sending the fake transactions $\consensusset'$ reduces gas usage. 

Based on this, we introduce the following property that the GCM must satisfy to prevent user shill attacks.

\begin{definition}[User Shill Proofness]
    \label{def:usershillproof}
    No user should be able to reduce the gas paid by their transaction by introducing more transactions, or $\forall\; \tx_1, \consensusset, \consensusset'$ then:
    \begin{align}
        \gas{\consensusset}(\tx_1) \leq  \gas{\consensusset\cup\consensusset'}(\tx_1) + \gas{\consensusset\cup\consensusset'}(\consensusset')
        .
    \end{align} 
\end{definition}

Consider the Set Inclusion property (Property 4, Acilan et al.~\cite{acilan2025transactionfeemarketdesign}), which states that if $T_1 \subseteq T_2$ then: $\gas{T\cup T_1}(T_1) \leq  \gas{T\cup T_2}(T_2)$.
User shill proofness is a subset of this property, i.e., if a GCM satisfies Set Inclusion, then it necessarily satisfies User Shill Proofness. 
However, the implications are much different. While Set Inclusion tries to counter user collusion, User Shill Proofness deals with users' fake transactions to reduce cost. While Set Inclusion is a good-to-have property, User Shill Proofness is a fundamental requirement to avoid throughput reduction due to fake transactions. We therefore view shill-proofness as a \emph{new} robustness notion rather than a restatement of coalition- or collusion-proofness for multidimensional fee markets~\cite{acilan2025transactionfeemarketdesign,kiayias2025one}. The gas-only Set Inclusion property asks how a single transaction's gas compares as its declared set grows; shill-proofness instead asks whether splitting a transaction into several (otherwise useless) transactions can lower the total charge. The gap widens once we move from gas to \emph{fees} in the presence of contingency (\cref{sec:feeshills}): our fee-based notions additionally account for the priority fee, the burned portion of the fee, and shills that under-execute or fail, none of which the gas-only property captures.

Next, we analyze the reasons behind scheduler shill attacks.
The scheduler can add a transaction in a natural GCM and increase the gas paid by the users, effectively increasing their revenue.
Denoting it mathematically, if the set of transactions submitted is $\consensusset$, then if there exists $\consensusset'$ such that $
    \gas{\consensusset}(\consensusset) <  \gas{\consensusset\cup \consensusset'}(\consensusset),
$
then the scheduler can add fake transactions in $\consensusset'$ to increase the total gas received by it. 

To prevent this, we introduce scheduler shill proofness.

\begin{definition}[Scheduler Shill Proofness]
    \label{def:schedshillproof}
    No scheduler should be able to increase the gas paid by honest transactions by introducing more transactions, or
    \begin{align}
        \gas{\consensusset}(\consensusset) \geq  \gas{\consensusset\cup \consensusset'}(\consensusset)
        .
    \end{align}
\end{definition}

Interestingly, we find that \nameref{def:schedshillproof} is at odds with a seemingly unrelated property, \nameref{def:efficiency}, which was suggested by Acilan et al.~\cite{acilan2025transactionfeemarketdesign}.
\begin{definition}[Efficiency]
    \label{def:efficiency}
    The sum of gas used by all transactions in the schedule must equal the makespan of the schedule, i.e., if $\makespan{\consensusset}$ represents the makespan of the schedule with transactions $\consensusset$, then $\gas{\consensusset}(\consensusset) = \makespan{\consensusset}.$
\end{definition}

\begin{restatable}[Scheduler Shill Proofness vs Efficiency]{lemma}{schedulershillefficiency}
    \label{lem:spamefficiency}
    It is impossible for a GCM with a limited number of parallel cores to achieve both Scheduler Shill Proofness and Efficiency.
\end{restatable}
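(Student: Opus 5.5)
We argue by contradiction. Suppose a GCM on $n$ cores satisfies both Efficiency and Scheduler Shill Proofness. Efficiency fixes the total gas charged to a set $\consensusset$ at $\makespan{\consensusset}$. Scheduler Shill Proofness demands that, after shills $\consensusset'$ are added, the honest set's share $\gas{\consensusset\cup\consensusset'}(\consensusset)$ is at most $\gas{\consensusset}(\consensusset)=\makespan{\consensusset}$. The plan is to build an instance where adding shills forces the honest share above $\makespan{\consensusset}$, no matter how the GCM splits gas. The lever is the finite number of cores: with at most $n$ cores, enough shills push the honest transactions into later time slots.

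\textbf{Step 1 (honest set).} Take $\consensusset$ to be $n$ pairwise non-conflicting transactions $\tx_1,\dots,\tx_n$. Transaction $\tx_i\simeq(1,\{\object_i\})$ uses a distinct object, so $\makespan{\consensusset}=1$. Efficiency then gives $\gas{\consensusset}(\consensusset)=1$.

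\textbf{Step 2 (shill set).} Take $\consensusset'$ to be $n$ further transactions, each of time $1$ on distinct fresh objects. Then $\consensusset\cup\consensusset'$ is $2n$ unit transactions with no conflicts, and with only $n$ cores $\makespan{\consensusset\cup\consensusset'}=2$. Efficiency now requires the total gas to be $2$. This alone does not pin down the honest share, because the GCM could charge the shills all of the extra unit.

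\textbf{Step 3 (the main obstacle).} We must force the honest transactions to absorb more than $1$ in total, and a single instance cannot do this against an arbitrary split. I would use the symmetry between the two instances. Relabel so that the shill set $\consensusset'$ plays the honest role, i.e. the scheduler's fakes are now "honest" and the original transactions are the fakes. Both $\consensusset$ and $\consensusset'$ consist of $n$ unit, mutually non-conflicting transactions, so $\makespan{\consensusset'}=1$ as well. Applying Scheduler Shill Proofness to each ordering gives
\[
\gas{\consensusset\cup\consensusset'}(\consensusset)\le 1
\quad\text{and}\quad
\gas{\consensusset\cup\consensusset'}(\consensusset')\le 1 .
\]
Summing yields total gas at most $2$, which does not yet contradict Efficiency's requirement of exactly $2$. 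So I would break the tie with three copies. Let $\consensusset_1,\consensusset_2,\consensusset_3$ each be $n$ unit, object-disjoint transactions. The union has makespan $3$ on $n$ cores, while every pairwise union has makespan $2$.

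Now apply Scheduler Shill Proofness with honest set $\consensusset_i$ and shills equal to the other two sets. Each $\consensusset_i$ receives at most $\gas{\consensusset_i}(\consensusset_i)=1$, so the total is at most $3$, which again equals the makespan. The counting is still tight, so the contradiction has to come from the makespan jump rather than from counting alone.

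I would therefore choose the shills to conflict with honest transactions. Let a single shill $\tx'\simeq(1,\{\object_1,\dots,\object_n\})$ conflict with every $\tx_i$. Then $\makespan{\consensusset\cup\{\tx'\}}=2$. The shill's own standalone makespan is $1$, which bounds its marginal contribution, while the honest set's standalone makespan is also $1$. Efficiency forces the combined gas to be $2$, and Scheduler Shill Proofness caps the honest share at $1$, so $\tx'$ must receive gas $1$.

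Now add a second, symmetric conflicting shill, giving makespan $3$. The cap on the honest share still forces the shills to absorb $2$. Repeat with $k$ shills. Pick $n$ and $k$ so that each shill is itself one of $n$ parallel "honest" transactions in a relabelled instance with makespan $1$, where its share is capped at $1/n$ by Efficiency and symmetry. The shills can then absorb at most about $k/n$ in total, while the makespan grows by $k$. The honest share must therefore exceed $1$, contradicting Scheduler Shill Proofness.

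Making this symmetry and anonymity argument rigorous is the crux of the proof. Transactions with identical $(t,O)$ up to renaming of objects must receive equal gas. I expect the paper either assumes this implicitly or argues from a concrete small instance.
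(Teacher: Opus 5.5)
\textbf{There is a genuine gap.} The step meant to produce the contradiction is the claim that the $k$ shills ``absorb at most about $k/n$''. That claim is false, and no symmetry assumption rescues it.

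Scheduler Shill Proofness, applied with the shills in the honest role, caps their share of the union only at their standalone makespan. Makespan is subadditive: $\makespan{A\cup B}\le\makespan{A}+\makespan{B}$, since you can run $A$ and then $B$. So any shill set that raises the makespan by $k$ has standalone makespan at least $k$. In your construction every shill touches all of $\object_1,\dots,\object_n$. The shills therefore conflict pairwise and cannot be the $n$ parallel transactions of your ``relabelled instance''.

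In fact, on that instance consider the rule ``the honest transactions present split $1$ equally and each shill pays $1$''. It satisfies Efficiency and Scheduler Shill Proofness for every nested pair of sub-collections, and it treats identical transactions identically. So no argument confined to that instance can yield a contradiction. The same subadditivity explains why your two- and three-copy counts came out tight: caps on the blocks of a single partition of the union always sum to at least the union's makespan.

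\textbf{The missing idea.} Pick an instance in which every single transaction is pivotal. Then apply Scheduler Shill Proofness to overlapping coalitions, with each transaction in turn playing the shill.

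Take $U=\{\tx_1,\dots,\tx_{n+1}\}$, pairwise non-conflicting, each of time $t$, on $n\ge 2$ cores. Every $n$-subset has makespan $t$, while $\makespan{U}=2t$. For each $j$, combine Efficiency on $U$ and on $U\setminus\{\tx_j\}$ with Scheduler Shill Proofness for honest set $U\setminus\{\tx_j\}$ and shill $\tx_j$. This gives
\[
\gas{U}(\tx_j)=2t-\gas{U}(U\setminus\{\tx_j\})\ge 2t-t=t,
\]
which is exactly the inference you made for $\tx'$. Summing over $j$ yields $\gas{U}(U)\ge(n+1)t>2t=\makespan{U}$, contradicting Efficiency.

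This is the paper's proof. It is your Step 2 with a single shill instead of $n$, rotated over which transaction plays the shill. It needs no anonymity assumption, because the lower bound is derived for each $\tx_j$ separately. It also lives entirely inside one union instance, contrary to your remark that a single instance cannot force the contradiction.
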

\begin{proof}
Consider $n$ cores and $n+1$ completely parallel transactions, which take time $t$ on a single core to execute. Any subset of $n$ transactions can be scheduled in parallel such that the makespan is $t$, but the $(n+1)$-th transaction must be added sequentially, such that the makespan becomes $2t$. Scheduler Shill Proofness would require that the $(n+1)$-th transaction consume at least $t$ gas. Since all transactions could be the $(n+1)$-th transaction, all transactions must consume $t$ gas, and that violates the efficiency requirement that states the total gas consumed must be the makespan. 
\end{proof}

Note that the proof relies on atomicity of the transaction, and if the transaction execution can be distributed across multiple cores, then this proof would not work. This is because the sequential transaction can be divided into smaller chunks and passed between cores to complete the execution in $\frac{n+1}{n} \cdot t$ time.
This is further discussed in \cref{sec:discussions}.

While it may be tempting to conclude that if a Gas Computation Mechanism (GCM) satisfies the two properties defined above then it is resistant to shill attacks, thereby yielding throughput optimality (if the schedule is optimal), this conclusion is not correct.
In practice, most blockchain protocols also require users to specify a priority fee as discussed in \cref{sec:priorityschedules}.
Combined with the execution contingency as introduced in \cref{sec:contingenttx}, shill attacks can potentially be made much worse by the adversary. 
The scheduling objective proposed in \cite{acilan2025transactionfeemarketdesign}, which focuses on minimizing makespan, does not capture this priority structure.
We therefore argue that the objective should instead be to maximize utility for scheduler rather than minimize makespan.

The dependence on both fee and gas consumption is critical. A transaction may consume substantial gas while offering only a negligible gas price, creating opportunities for profitable shill attacks.
For instance, suppose a transaction offers a gas price of only $\varepsilon \sim 0$, yet its gas consumed is determined by one of the GCM methods described in \cite{acilan2025transactionfeemarketdesign}.
A recurring feature of these methods is that adding a highly parallelizable transaction to the schedule reduces the effective gas consumption attributed to other transactions. A user can therefore submit a low-priced, highly parallelizable fake transaction to decrease the gas consumption of their actual transaction, thereby lowering the total fee paid. 

To address this vulnerability, we follow by refining the notion of shill-proofness (and related properties), formulating them in terms of paid fees rather than raw gas consumption.

\section{Fee-based Shill Proofness}
\label{sec:feeshills}
We have so far analyzed shill attacks on GCMs through the lens of \emph{gas consumed}.
Now, we revisit shill attacks with a focus on the \emph{fee paid} in the presence of \emph{contingent transactions}. When a contingent transaction under-executes, it could pay a much lower fee compared to the fee it \emph{could} have paid if it had executed fully. This asymmetry enables new fee-reduction strategies: rational users can inject failing or near-failing transactions to alter contention and scheduling so that their high-value transaction pays strictly less. We formalize fee-based shill-proofness, show why baseline designs typically violate it when failures pay near-zero, and outline conditions that restore robustness.

We begin by first discussing the goal of the next two sections.
When considering how to compute gas (GCM) or the fee paid by the transaction (TFM) in the context of parallel execution, the first question that rises is, what schedule must be used?
Finding the best schedule $\schedule$ that optimizes scheduler revenue constitutes our first aim.
However, optimizing the revenue requires us to also know what fee a transaction pays in the schedule, and thus while computing the schedule, the fee paid by the transaction based on the said schedule and the priority fee set by the user, must also be computed in conjunction. Moving forward, we denote the schedule by $\schedule$, since given the fee-structure, all parties can deterministically calculate it.
Also, the sub-schedule is represented by $\schedule^{\consensusset}_\tx$.
Adversaries are assumed to not want to change the execution of any transaction, i.e., $\exec(\state \xrightarrow{\schedule^{\consensusset}_\tx}\state', \tx_1) = \exec(\state \xrightarrow{\schedule^{(\consensusset\cup\consensusset')}_\tx}\state'', \tx_1)$.
For brevity, denote $\schedule(\consensusset)$ by $ \schedule$, and $ \schedule(\consensusset\cup\consensusset') $ as $ \schedule'$.

\emph{User shill-proofness}~(\cref{def:usershillproof}) ensures that users would not be able to reduce the gas or computation units attributed to their transactions by adding shill transactions.
Given a model in which contingent transactions exist, an adversarial user must not be able to add \emph{fake} contingent transactions that always under-executes to reduce the \emph{fee} paid by its main transaction.  Formally,

\begin{definition}[Fee-Based User Shill Proofness]
\label{def:feebasedusershillproofness}
    Introducing fake transactions cannot reduce the total fee paid by the expected execution of transaction $\tx_1$, even if while under-executing the fake transactions have a zero user-ideal fee: $\uifee = 0$.
    That is, for any state $\state$ and $\forall\; \tx_1, \consensusset, \consensusset'$
    \begin{align}
        \paidfee(s,\schedule, \tx_1) \leq  \paidfee&(s,\schedule',\tx_1) + \paidfee(s,\schedule',\consensusset')
    \end{align} 
\end{definition}
Here, we assume that $\tx_1$ never under-executes and always pays the attainable fee $\attainablefee$. For the three risk divisions defined in~\cref{sec:newprotocol}, this translates to different conditions.
\begin{itemize}
    \item \textbf{User-Friendly Division} 
    \[\attainablefee(s,\schedule, \tx_1) \leq  \attainablefee(s,\schedule',\tx_1)
    \]
    \item \textbf{Scheduler-Friendly Division}
    \begin{align*}
    \attainablefee(s,\schedule, \tx_1) \leq  \attainablefee&(s,\schedule',\tx_1) +\attainablefee(s,\schedule',\consensusset')
    \end{align*}
    \item \textbf{Even-Steven Division} 
    \begin{align*}
    \attainablefee(s,\schedule, \tx_1) \leq  \attainablefee&(s,\schedule',\tx_1) +0.5\cdot\attainablefee(s,\schedule',\consensusset')
    \end{align*}
    \item \textbf{Generalized Division} 
    \begin{align*}
    \attainablefee(s,\schedule, \tx_1) \leq  \attainablefee&(s,\schedule',\tx_1)+\alpha\cdot\attainablefee(s,\schedule',\consensusset')
    \end{align*}
    \end{itemize}

Next, we look at re-defining scheduler shill proofness. Here, we relax the assumption that the scheduler utility is defined in terms of fee paid, since the scheduler is introducing its own transactions, and it only receives back the fee received back from, thus incurring a cost of $\paidfee-\receivedfee$ from the transaction. From~\cref{def:schedshillproof}, a scheduler must not be able to strategically insert fake transactions such that the fee paid by other transactions increases more than the cost it pays. 

\begin{definition}[Fee-Based Scheduler Shill Proofness]
\label{def:feebasedschedshillproofness}
    No scheduler should be able to increase the revenue received from honest transaction ($\tx_1$) by more than the cost incurred by introducing fake transactions ($\consensusset'$), even if all transactions in $\consensusset'$ under-execute. In other words, $\forall\; \tx_1, \consensusset, \consensusset'$
    \begin{align}
        \receivedfee(s,&\schedule',\tx_1)- \receivedfee(s,\schedule, \tx_1) \\ \notag
        \leq  &\paidfee(s,\schedule',\consensusset') - \receivedfee(s,\schedule',\consensusset')
    \end{align} 
\end{definition}

We assume that $\receivedfee = \gamma\cdot \paidfee$, and thus,
\begin{align}
    \gamma(\paidfee(s,\schedule', \tx_1) - \paidfee(s,\schedule,\tx_1)) 
    \leq  (1-\gamma)\paidfee(s,\schedule',\consensusset')
\end{align}

Again, we assume that $\tx_1$ never under-executes and always pays the attainable fee $\attainablefee$, and all introduced shill transactions under-execute to pay as little as possible.
For the risk divisions of~\cref{sec:newprotocol}, this translates to different conditions.
\begin{itemize}
    \item \textbf{User-Friendly} 
    \[\attainablefee(s,\schedule', \tx_1) -  \attainablefee(s,\schedule,\tx_1) \leq 0
    \]
    \item \textbf{Scheduler-Friendly} 
    \begin{align*}
    \attainablefee(s,\schedule', \tx_1) &-  \attainablefee(s,\schedule,\tx_1) \leq \frac{1-\gamma}{\gamma} \attainablefee(s,\schedule',\consensusset')
    \end{align*}
    \item \textbf{Even-Steven Division} 
    \begin{align*}
    \attainablefee(s,\schedule', \tx_1) &-  \attainablefee(s,\schedule,\tx_1) \leq \frac{1-\gamma}{2\gamma} \attainablefee(s,\schedule',\consensusset')
    \end{align*}
    \item \textbf{Generalized Division} 
    \begin{align*}
    \attainablefee(s,\schedule', \tx_1) &-  \attainablefee(s,\schedule,\tx_1) \leq \alpha\frac{1-\gamma}{\gamma} \attainablefee(s,\schedule',\consensusset')
    \end{align*}
    \end{itemize}

Based on the above feasible regions, we prove \cref{thm:independence-userfriendly}, which implies that transaction fees must not depend on parallelism if a user friendly mechanism is used.
\begin{theorem}[Independence in User-Friendly Mechanisms]
\label{thm:independence-userfriendly}
    In a parallel execution environment using a User-Friendly risk division (where User Risk $\userrisk=0$ and $\alpha=0$), a Transaction Fee Mechanism (TFM) satisfies both Fee-Based User Shill Proofness and Fee-Based Scheduler Shill Proofness if and only if the attainable fee of a transaction is independent of all other transactions in the schedule. 
\end{theorem}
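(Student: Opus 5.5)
The plan is to prove both directions directly from the two User-Friendly conditions derived above. With $\alpha=0$, Fee-Based User Shill Proofness reduces to $\attainablefee(s,\schedule,\tx_1)\le\attainablefee(s,\schedule',\tx_1)$. Fee-Based Scheduler Shill Proofness reduces to $\attainablefee(s,\schedule',\tx_1)-\attainablefee(s,\schedule,\tx_1)\le 0$. Here $\schedule=\schedule(\consensusset)$ and $\schedule'=\schedule(\consensusset\cup\consensusset')$ range over all $\tx_1,\consensusset,\consensusset'$. The key observation is that under $\alpha=0$ the shills pay $\uifee=0$ when they under-execute, so the shill-cost terms vanish from both inequalities. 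I will first re-derive these two reductions briefly: substitute $\paidfee=\uifee$, set $\uifee(\consensusset')=0$, and use that $\tx_1$ fully executes so that $\paidfee(\tx_1)=\attainablefee(\tx_1)$. For the scheduler side, I divide by $\gamma>0$ and note that the right-hand side $(1-\gamma)\paidfee(s,\schedule',\consensusset')$ is $0$.

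For the ``if'' direction: suppose $\attainablefee(s,\schedule,\tx_1)$ depends only on $\tx_1$ (and $s$), not on the other transactions in the schedule. Then $\attainablefee(s,\schedule,\tx_1)=\attainablefee(s,\schedule',\tx_1)$ for every $\consensusset'$. Both inequalities hold with equality, and both shill-proofness properties follow.

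For the ``only if'' direction: assume both properties hold. Combining the two reduced inequalities gives $\attainablefee(s,\schedule(\consensusset),\tx_1)=\attainablefee(s,\schedule(\consensusset\cup\consensusset'),\tx_1)$ for every $\consensusset\ni\tx_1$ and every $\consensusset'$. To get full independence I will compare any two sets $\consensusset_1,\consensusset_2$ containing $\tx_1$. Applying the equality to $\consensusset=\consensusset_1$ with $\consensusset'=\consensusset_2\setminus\consensusset_1$, and to $\consensusset=\consensusset_2$ with $\consensusset'=\consensusset_1\setminus\consensusset_2$, both sides equal the fee under $\schedule(\consensusset_1\cup\consensusset_2)$. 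Taking $\consensusset_1=\{\tx_1\}$ then shows the attainable fee equals the fee $\tx_1$ would pay in isolation, hence it is independent of all other transactions.

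The main obstacle is making the ``only if'' direction legitimate given the quantifiers and side assumptions. The shill-proofness definitions quantify over arbitrary $\consensusset'$, but the paper also assumes the adversary does not change $\tx_1$'s execution. I need every $\consensusset_2\setminus\consensusset_1$ to be admissible as a shill set, meaning it consists of transactions that can under-execute with $\uifee=0$ and leave $\exec$ of $\tx_1$ unchanged. I will handle this by arguing that the adversary may always choose such contingent variants of the transactions, since they can be made to revert immediately. I will also note that the definitions are stated for all $\consensusset'$, so the universal quantification suffices. If it does not fully suffice, I will state the conclusion as independence over all admissible co-scheduled sets.
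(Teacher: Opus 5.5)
Your proposal is correct and takes the same approach as the paper. Both reduce the two fee-based shill-proofness conditions under $\alpha=0$ (shills paying $\uifee=0$, $\tx_1$ paying $\attainablefee$) to $\attainablefee(s,\schedule,\tx_1)\le\attainablefee(s,\schedule',\tx_1)$ and its reverse inequality, then combine them into equality; your explicit ``if'' direction, the union argument that turns invariance under added transactions into full independence, and the stated reliance on $\gamma>0$ fill in details the paper leaves implicit.
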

The proof follows from the User-Friendly risk division's conditions for Fee-Based User and Scheduler Shill Proofness.

\begin{proof}
    Under user-friendliness, User Shill Proofness implies:
    \[
        \attainablefee(s,\schedule, \tx_1) \leq  \attainablefee(s,\schedule',\tx_1)\]
    And, scheduler shill proofness gives:
    \[
        \attainablefee(s,\schedule', \tx_1) \leq  \attainablefee(s,\schedule,\tx_1)\]
    The two can be satisfied together only when:
    \[
        \attainablefee(s,\schedule', \tx_1)=  \attainablefee(s,\schedule,\tx_1)\]
    And, this implies that the fee paid by the user is independent of other transactions in the system.
\end{proof}

Let's revisit the running example to illustrate shill attacks.
\begin{example}[Running Example: Shilling Alice's Swap]
The two attacks apply to Alice's swap.
In a \emph{user} shill attack, Alice adds a contingent transaction touching many idle objects, disjoint from $\{\opx,\opool,\obal\}$, but does not pay the fee since the transaction never uses the declared objects.
Because the throwaway runs in parallel with $\txswap$, the GCM or the TFM associated with it, now spreads the block's makespan over more transactions and attributes \emph{less} gas to $\txswap$, lowering Alice's fee.
In a \emph{scheduler} shill attack, the scheduler inserts a transaction that also touches the in-demand pool $\opool$; this makes $\txswap$ look like it contends for more shared state, inflating the gas charged to it. Both manipulations change Alice's fee without changing what her transaction actually does.
\end{example}

\section{Transaction Fee Mechanism Design Space}
\label{sec:tfm-design-space}
We now formalize the design space of TFMs for parallel execution environments and identify the key properties such mechanisms must satisfy. Our goal is to understand how different choices of \emph{schedulers} and \emph{risk-divisions} interact, and to characterize which combinations yield mechanisms that are both economically robust and resistant to manipulation.

At a high level, a TFM operates in conjunction with a scheduler that determines a feasible execution schedule for a set of transactions, subject to parallelism and object constraints. The fee mechanism must determine how execution uncertainty is priced, leading to different \emph{risk-division} regimes as defined in \cref{sec:riskdivision}, ranging from \emph{user-friendly} (users pay only for objects they actually consume), to \emph{scheduler-friendly} (users pay for all objects they declare), with intermediate compromises generalized by the generalized division rule. 

Further, when creating the schedule, the scheduler does not know the execution trace of the transaction, and thus cannot know the objects that would be used by the transaction. As defined in~\cref{sec:revenuescheduler}, the scheduler could have various utility definitions as well when designing the schedule, which ranges from \emph{optimistic}, \emph{pessimistic} as well as somewhere in between. We would look at how to create schedules based on only \emph{optimistic} and \emph{pessimistic} schedulers and leave a generalized utility function as a future work. 

A central difficulty in designing TFMs in this setting is the inherently \emph{cyclic} dependence between scheduling and fees. On the one hand, the scheduler's choice of which transactions to include and how to order or parallelize them depends on the fees offered by transactions. On the other hand, the fee paid by a transaction may itself depend on the realized execution schedule, particularly when fees are tied to the realized parallelism. This circular dependence complicates both the analysis and the implementation of TFMs.

\subsection{Object-Weighted TFM (OW-TFM)}

To break this cyclic dependency, we begin by focusing on a class of transaction fee mechanisms inspired by the weighted-area gas model introduced in prior work on transaction fee market design for parallel execution~\cite{acilan2025transactionfeemarketdesign}. In this model, the fee charged to a transaction is \emph{independent of the execution schedule}, while still capturing the effects of parallelism by pricing the objects based on past executed transactions. A key assumption for OW-TFM is that object demand and thus object prices, remain relatively stable across consecutive rounds, allowing transactions to be priced using per-object prices that are fixed at the time of inclusion.

An important consequence of using past transaction data to determine the fee paid by the transaction is that \emph{no shill attacks} are relevant unless the price is manipulated by inserting transactions in previous blocks.

Concretely, we assume that a transaction $\tx$ uses computation units $\compunits$ and declares a set of objects it may use during execution. Each object $\object$ has a price $p_\object$. If $\tx$ uses a object $\object$ and declares a priority fee of $\pi \geq 1$, then it pays a price $\pi\cdot p_\object\cdot \compunits$ for that object. To capture execution uncertainty, we parameterize the division of risk between users and the scheduler by a parameter $\alpha \in [0,1]$. For any object that is \emph{declared but not used} during execution, the transaction pays a fee of $\alpha \cdot p_\object$. Setting $\alpha = 0$ corresponds to user-friendly risk division, while $\alpha = 1$ corresponds to scheduler-friendly risk division. Intermediate values of $\alpha$ interpolate between these extremes and allow for more balanced risk-sharing arrangements.

\begin{example}[Running Example: Pricing Alice's Swap]
Under OW-TFM, with Alice's priority $\pi = 1$, compute $\compunits = 1$, and object prices $p_{\opx} = 1$, $p_{\opool} = 3$, $p_{\obal} = 1$, the favorable execution touches all three objects and costs $1 + 3 + 1 = 5$. On the under-executed branch she uses only $\opx$, but still owes $\alpha$ times the price of the two declared-but-unused objects, so she pays $1 + \alpha\,(3 + 1) = 1 + 4\alpha$. This recovers the same numbers as before: $1$ at $\alpha = 0$, $3$ at $\alpha = \tfrac12$, and $5$ at $\alpha = 1$, confirming that OW-TFM realizes the risk split of \cref{sec:newprotocol} with a fee that does not depend on the rest of the schedule.
\end{example}

We use a price update function similar to EIP-1559~\cite{eip1559}.
In order to formally describe this function, define the realized utilization of object $\object$ in block $\block$ (with transactions $B_\block$) as
\begin{align*}
\usage_{\object,\block}:=\sum_{\tx:=(\compunits,\_,{\readset,\writeset}) \in B_\block} \compunits\cdot\mathbf{1}\!\left\{\object \in \{\readset\cup \writeset\}\right\},
\end{align*}
which represents the time for which the object was reserved in the block . For each object $\object\in\setobjects$, we fix a target utilization $\usage_\object^\star>0$. Let $p_{\object,\block}>0$ denote the posted price for objects $\object$ in block $\block$. Fix a responsiveness parameter $\eta>0$.
The price for the next block is updated according to:
\begin{align*}
p_{\object,(\block+1)} &= p_{\object,\block}\cdot \exp\left(\eta \cdot \frac{\usage_{\object,\block}-\usage_\object^\star}{\usage_\object^\star} \right)\\
&\approx p_{\object,\block}\cdot \left(1+\eta \cdot \frac{\usage_{\object,\block}-\usage_\object^\star}{\usage_\object^\star} \right).
\end{align*}

\paragraph{Shill-proofness of the OW-TFM.}
Here, prices are set from the measured object utilization $\usage_{\object,\block}$ of the \emph{previous} block. Adding a transaction can only add to this measured usage or leave it unchanged: each declared object contributes a non-negative amount, and a contingent transaction that under-executes simply contributes less (possibly zero), but never a negative amount. So a shill can push the recorded usage of an object \emph{up} but cannot push it \emph{down}. This is why a contingent transaction, even though it may under-execute, cannot be used to \emph{lower} an object's measured consumption: under-execution only means the shill itself counts for less, not that honest usage is erased. The one way to reduce an object's recorded usage would be to remove honest transactions that use it (censorship), which is a separate kind of manipulation that we treat as out of scope. Consequently, the only relevant manipulation here is a scheduler shill attack that inflates an object's usage to raise its price in the next block.

\begin{lemma}[Scheduler Shill Proofness of OW-TFM]
    \label{lem:shill-ow-tfm}
    Consider block $B_\block$ with a block number $\block$ and a shill transaction $\tx^s$ which declares object $\object$ (i.e., $\object\in \readset\cup\writeset$).
    Let $\usage_\object^\star$ represent the object's target and average usage.
    If $\alpha \geq \pi_\object\eta\frac{\gamma}{(1-\gamma)}$, then the resulting TFM is scheduler shill proof, where $\pi_\object$ is the average priority for transactions using \object in block $\block+1$.  
\end{lemma}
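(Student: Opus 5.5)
We plan to verify the Fee-Based Scheduler Shill Proofness inequality (\cref{def:feebasedschedshillproofness}) directly for OW-TFM, comparing the scheduler's cost of inserting the shill $\tx^s$ in block $\block$ with the extra revenue this generates from honest transactions in block $\block+1$. Because OW-TFM prices are fixed by the previous block, the shill cannot change any honest fee in block $\block$ itself. Its only effect is through the price update for $p_{\object,(\block+1)}$. For this reason we take the relevant ``honest revenue'' to be the fees paid in block $\block+1$ by transactions using $\object$, and the relevant cost to be the burned part of the shill's own payment in block $\block$. We treat one declared object $\object$ at a time. If the shill declares several objects, both sides are sums over those objects, so the per-object inequality is enough.

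\textbf{Cost side.} We will assume the worst case for the designer, in which $\tx^s$ under-executes and never touches $\object$. It then pays only the contingency charge $\alpha\, p_{\object,\block}\,\compunits^s$ for $\object$ (taking priority $1$, the minimum, since a higher priority only raises the cost). The scheduler gets back $\gamma$ of this, so its net cost is $(1-\gamma)\,\alpha\, p_{\object,\block}\,\compunits^s$.

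\textbf{Benefit side.} Since $\usage_{\object,\block}$ counts declared objects, the shill raises it by exactly $\compunits^s$. Using the linearized update, the next price goes up by
\[
\Delta p = p_{\object,\block}\,\eta\,\frac{\compunits^s}{\usage_\object^\star}.
\]
In block $\block+1$, the honest transactions using $\object$ have total compute of about $\usage_\object^\star$, since $\usage_\object^\star$ is also taken as the average usage. Their average priority is $\pi_\object$, so they pay roughly $\pi_\object\,\usage_\object^\star\,\Delta p = \pi_\object\,\eta\, p_{\object,\block}\,\compunits^s$ more. The scheduler keeps a $\gamma$ fraction of this, i.e.\ an extra revenue of $\gamma\,\pi_\object\,\eta\,p_{\object,\block}\,\compunits^s$.

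\textbf{Comparison.} The definition requires gain $\le$ cost, that is, $\gamma\,\pi_\object\,\eta \le (1-\gamma)\,\alpha$. This is exactly the stated condition $\alpha \ge \pi_\object\eta\frac{\gamma}{1-\gamma}$. The factor $p_{\object,\block}\compunits^s$ cancels, so the bound holds for any shill size and any current price. This matches the Generalized Division condition of \cref{sec:feeshills}, with honest fee differences measured in the following block.

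\textbf{Main obstacle.} The delicate step is justifying the linearization and the use of averages. The exact update is multiplicative, $\exp(\eta\compunits^s/\usage_\object^\star)-1 \ge \eta\compunits^s/\usage_\object^\star$, so for a large shill the true gain exceeds the linear estimate. I would first handle this by working in the regime where the $\approx$ of the update rule is taken as the model, as the paper does, and by noting that a large shill is better split across blocks. If a rigorous statement is required, I would instead restrict to $\compunits^s \ll \usage_\object^\star$ or add a $(1+O(\eta\compunits^s/\usage_\object^\star))$ slack factor. A second subtlety is that block $\block+1$ demand might fall once the price rises, which could only reduce the scheduler's gain. I will also note that multi-block effects compound, but each block's shill is priced at that block's price, so the per-block argument repeats.
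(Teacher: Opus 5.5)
Your proposal is correct and follows the paper's proof essentially step for step. Like the paper, you linearize the price update, treat all block-$(\block+1)$ users of $\object$ as one transaction with usage $\usage_{\object}^\star$ and average priority $\pi_{\object}$, and price the under-executing shill at priority $1$. You then insert the resulting fee increase $\eta\, p_{\object,\block}\,\pi_{\object}\,\compunits^s$ into the Generalized Division form of fee-based scheduler shill proofness, where $p_{\object,\block}\compunits^s$ cancels. Your caveat that the exact exponential update makes the scheduler's gain exceed the linear estimate is fair. The paper's proof does not address it; it simply adopts the approximation.
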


\begin{proof}
    
    Since this is a multi-block attack, assume that the state is common before block $b$, and that $\tx_1$ is a transaction in block $\block+1$ (or all transactions that declare \object, combined into a single transaction), that has $\compunits_{\tx_1}$ of $\usage_\object$. The priority for the transaction is the average priority of all transactions in this block that use object $\object$ and is represented by $\pi_\object$.

    \paragraph{Without Shill Transaction.}
    In block $\block$, without the shill transaction, let the object utilization is $\usage_{\object,\block}$. The price of $\tx_1$ with priority $\pi_i$ without any manipulation in block $\block+1$ is:
    \begin{align*}
       \attainablefee(s,\schedule, \tx_1) = p_{\object,\block}\cdot \left(1+\eta \cdot \frac{\usage_{\object,\block}-\usage_\object^\star}{\usage_\object^\star} \right)\cdot \pi_\object\cdot \usage_\object^\star
    \end{align*}
    
    \paragraph{With Shill Transaction.}
    In block $\block$, let the scheduler add the shill transaction $\tx^s$. Since the priority would only increase the fee the shill transaction pays, we will assume that the priority fee is $1$. The cost of $\tx_1$ is now given by 
    \begin{align*}
        \attainablefee(s,\schedule', \tx_1) &= p_{\object,\block}\cdot \left(1+\eta \cdot \frac{\usage_{\object,\block}+\compunits_s-\usage_\object^\star}{\usage_\object^\star} \right)\cdot \pi_\object\cdot \usage_\object^\star \\
        &= \attainablefee(s,\schedule, \tx_1)+ \eta p_{\object,\block}\pi_\object\compunits_s,
    \end{align*}
    whereas the (attainable) fee paid by the shill transaction is 
    \begin{align*}
        \attainablefee(s,\schedule',\tx_s) = \pi_s\cdot p_{\object,\block}\compunits_s \geq p_{\object,\block}\compunits_s
    \end{align*}
    From fee-based scheduler shill proofness, 
    \begin{align*}
        \attainablefee(s,\schedule', \tx_1) -  \attainablefee(s,\schedule,\tx_1) &\leq \alpha\frac{1-\gamma}{\gamma} \attainablefee(s,\schedule',\tx_s)\\
        \eta p_{\object,\block}\pi_\object\compunits_s \leq \alpha\frac{1-\gamma}{\gamma}p_{\object,\block}\compunits_s &\implies
        \eta \pi_\object\frac{\gamma}{(1-\gamma)} \leq \alpha
    \end{align*}
\end{proof}

\subsection{Empirical Parameterization}\label{sec:empirical}
We now plug in real-world numbers from Ethereum to see what our theoretical bounds look like in practice.
 
\paragraph{Priority-to-base Fee Ratio ($\pi_o$).}
On Ethereum, the average priority fee is about $4\times$ the base fee, giving $\pi_o = 5$ (since $\pi_o$ includes the base fee itself). The distribution is skewed: most transactions have a low priority multiplier (median $\approx 1.2$), but a small fraction pay much more (the 95th percentile exceeds~$16$). These values are obtained from the Dune Analytics query in Open Science. We evaluate three scenarios: $\pi_o \in \{1.2,\; 5,\; 16\}$, representing median, mean, and high-priority transactions.
 
\paragraph{Responsiveness ($\eta$) and Retention Ratio ($\gamma$).}
EIP-1559~\cite{eip1559} sets $\eta = \frac{1}{8}$ as the price update speed.
The retention ratio $\gamma = \frac{r}{f}$ is the scheduler's fee share, with the rest burned.
Let $\gamma = \frac{1}{2}$ throughout, i.e., half of each fee is burned.
 
\paragraph{Scheduler Shill-proofness Threshold.}
By \cref{lem:shill-ow-tfm}, OW-TFM is scheduler shill proof when $\alpha \geq \pi_o \cdot \eta \cdot \frac{\gamma}{1-\gamma}$.
Plugging in $\eta = \frac{1}{8}$ and $\gamma = \frac{1}{2}$ gives $\alpha_{\min} = \frac{\pi_o}{8}$.
\cref{tab:alpha-threshold} summarizes the result for each scenario.
At the median priority, a small amount of user risk ($\alpha = 0.15$) is enough to make shill attacks unprofitable. At the mean priority, the required $\alpha$ rises to $0.625$, which falls between Even-Steven and scheduler-friendly. At the 95th percentile, $\alpha_{\min}$ exceeds~$1$, meaning no risk division within our framework can guarantee shill-proofness against the highest-priority attackers. However, since $\pi_o$ represents the \emph{average} priority across all transactions using a given object, the mean value of~$5$ is the most representative case for a typical block.
 
\begin{table}
\centering
\caption{Minimum risk-division parameter $\alpha$ for scheduler shill-proofness of OW-TFM under empirically-estimated Ethereum fee parameters.}
\begin{tabular}{lccc}
\toprule
Scenario & $\pi_o$ & $\gamma$ & $\alpha_{\min}$ \\
\midrule
Median priority & $1.2$ & $0.5$ & $0.15$ \\
Mean priority   & $5$   & $0.5$ & $0.625$ \\
95th percentile & $16$  & $0.5$ & --- \\
\bottomrule
\end{tabular}
\label{tab:alpha-threshold}
\end{table}

\paragraph{Contingency Impact.}
Heimbach et al.~\cite{heimbach2025dissecting} find that 42.6\% of Ethereum transactions would have benefited from access lists, and that 19.6\% of the access lists that were supplied turned out to be wrong, mostly because the state at execution time was not known in advance (which we prove in ~\cref{thm:impossible-simultaneous-mitigation}). These numbers show that contingency is common in practice. Under a scheduler-friendly mechanism ($\alpha = 1$), users with inaccurate declarations overpay for objects they never use. Under a user-friendly mechanism ($\alpha = 0$), the scheduler loses revenue instead. Since roughly 40\% of transactions are involved in conflicts~\cite{heimbach2024defi, wahrstatter2025execution} and a large share of those exhibit contingent behavior, the choice of~$\alpha$ has a real economic impact that grows with block size and object contention.

\section{Conclusion and Discussions}
\label{sec:discussions}
Parallel execution promises throughput gains for modern blockchains, but it fundamentally complicates the economics of transaction inclusion. In this paper, we show that when execution is uncertain, parallel execution based transaction fee mechanisms face inherent limitations. Execution contingency introduces a structural trade-off between user risk and scheduler risk: without pre-execution or collapsing to constant fees, no mechanism can simultaneously eliminate both. This impossibility is not an artifact of particular designs, but follows from the inherent sequentiality of determining object usage in expressive smart contracts. As a result, fee mechanisms must make explicit choices about how uncertainty is allocated, rather than attempting to hide it.

Independently, we demonstrate that parallel gas computation mechanisms are vulnerable to rational shill attacks by both users and schedulers, even when they satisfy previously proposed fairness and efficiency properties. We formalize fee-based shill-proofness as a necessary security requirement and show how execution contingency amplifies these attacks by reducing the cost of fake transactions. Taken together, our results clarify that pricing parallel execution is a mechanism-design problem with unavoidable trade-offs. Within this constrained design space, we show that object-weighted fee mechanisms with appropriate risk-division parameters can achieve robustness against shill attacks while remaining compatible with parallel execution. These findings provide foundational guidance for evaluating and designing transaction fee mechanisms in emerging parallel blockchain systems.

\paragraph{On Atomicity of Scheduled Transaction}
Using today's systems, the number of parallel processes can exceed the number of total cores that the system possesses.
This then leads to a distribution of workload across all cores through (very optimized) context switching of processes across all cores.
In such a case, the actual number of threads available for execution can be considered to be infinite, but the total time to execute each transaction is scaled by the extra number of threads in use at each time.
In such a case, consider the example in the proof for \cref{lem:spamefficiency}.
The $n+1$ transactions can now be scheduled on $n$ threads, and due to context switching all the threads would share the workload.
This leads to a makespan of $\sim \frac{n+1}{n} t$, which then can still satisfy efficiency, while satisfying scheduler shill-proofness.
This can be useful especially with unsafe mode, where the locks on objects are not checked at any moment. 
(Unsafe mode is safe since the schedule created never has shared overlapping objects)

\paragraph{On Attainable Fee.} We define the \emph{attainable fee} as the fee a transaction would pay if all declared contingent objects were used during execution. This abstraction simplifies analysis by providing a clean upper bound on what a scheduler might reasonably expect to collect from a transaction.

In practice, however, some transactions may never use all declared contingent objects under any execution path.
For example, consider an AMM router transaction that inspects multiple liquidity pools and conditionally routes execution to the one pool offering the best price.
While such a transaction may declare several pools as contingent objects, only one will be accessed.
In this case, the attainable fee does not correspond to any realizable execution.
We intentionally abstract away this nuance.
For the scheduler, distinguishing between ``logically impossible'' and ``unlikely'' execution paths is infeasible without executing the transaction.

\paragraph{On Execution of Transactions.}
Our framework treats execution outcomes adversarially or pessimistically. Incorporating probabilistic models of execution, informed by historical behavior or application-level semantics, could enable mechanisms that optimize expected risk rather than worst-case risk. Such approaches would necessarily introduce new assumptions and attack surfaces, and merit careful security analysis.

\paragraph{Declaring Contingent Objects.}
Our mechanisms require users to declare the objects that their transaction may touch, including contingent ones.
In practice, declarations can be produced by client software, the same way Ethereum access lists are generated by simulating transactions before broadcasting them.
Lying in either direction can be discouraged.
If a user \emph{under}-declares, i.e., touches an object it did not declare, this can be caught at execution time, since the runtime sees the access.
The transaction can then be reverted while still being charged, as already happens for failed Ethereum transactions.
If a user \emph{over}-declares, i.e., lists objects it never uses, this can be made costly by charging for declared-but-unused objects, which is exactly what the scheduler-friendly division (and any $\alpha > 0$) does.
Getting users to reveal their true object sets is an important deployment constraint, and a mechanism that fully enforces it is left to future work.
 
\paragraph{Relation to Ethereum and EIP-1559.}
Ethereum today uses EIP-1559~\cite{eip1559}, which prices a single, one-dimensional notion of gas and does \emph{not} charge transactions for contention over specific objects.
As long as execution is serial, there is no parallelism to price, so contention and contingency do not show up in the fee.
The point of this paper is that \emph{once} parallel execution is adopted, these effects appear and must be priced, otherwise the shill attacks and risk trade-offs we describe become possible.
This is why the setting is relevant to chains that already run transactions in parallel, such as Sui~\cite{suigas}, Solana~\cite{solanafoundation_2019}, and Monad~\cite{Monad}, and to Ethereum's own roadmap toward parallel execution via block-level access lists~\cite{AccessLists}.

\appendix
\section*{Acknowledgements}
We would like to thank Alberto Sonnino (Mysten Labs), Kushal Babel (Category Labs) and Alexander Spiegelman (Aptos) for discussions on the project, and helping us understand the landscape of the various deployed blockchains.

\section*{Ethical Considerations}
\paragraph{Stakeholders and Scope.}
This work primarily affects blockchain protocol designers, system architects, and researchers developing execution engines and transaction fee mechanisms for parallel execution.
Indirect stakeholders include blockchain users whose transaction fees and execution outcomes depend on these mechanisms, validators or schedulers whose incentives are shaped by fee design, and application developers whose smart contracts interact with parallel execution environments.
As transaction fee mechanisms influence access to blockspace, the results are relevant to auditors, regulators, and ecosystem participants concerned with fairness and manipulation resistance in blockchain infrastructure.

\paragraph{Research Conduct.}
The research is theoretical and analytical, supported by formal modeling and adversarial reasoning. We analyze fee mechanisms under explicitly stated rational-adversary models and do not involve experiments on live systems, user data, or deployed protocols. Prior work is cited carefully, and our contributions are framed as identifying fundamental limitations and design trade-offs rather than prescribing immediate deployment recommendations.

\paragraph{Benefits and Public Interest.}
By identifying impossibility results and previously unrecognized attack vectors in parallel transaction fee mechanisms, this work helps prevent unsafe or economically unsound protocols from being deployed.
Clarifying the trade-offs between user risk, scheduler risk, and shill resistance contributes to more transparent and accountable fee-market design, supporting fairer transaction inclusion and more predictable user costs.
These insights serve the public interest by strengthening the robustness of blockchain systems that increasingly underpin financial and digital infrastructure.

\paragraph{Risks and Limitations.}
The techniques analyzed in this paper, such as shill attacks exploiting parallelism and execution uncertainty, could potentially be misused by adversarial actors to extract value from poorly designed systems.
Profitable shills can waste scarce execution capacity, distort prices, and change inclusion incentives.
Because the paper analyzes mechanism-level issues in \cite{acilan2025transactionfeemarketdesign}'s proposed designs and not vendor-specific implementation vulnerabilities, vendor disclosure is not directly applicable.
Our analysis assumes rational, economically motivated adversaries and does not model fully malicious behavior, denial-of-service attacks, or off-chain collusion. We explicitly document these assumptions, and caution that deploying fee mechanisms outside their modeled conditions may lead to unintended consequences.

\section*{Open Science}

The query below can be run as is on Dune analytics~\cite{DuneAnalytics} to get the current priority to base ratio. 
\begin{lstlisting}[language=SQL, caption={Query used to estimate priority-to-base fee ratios over the last 30 days.}]
WITH tx_fees AS (
  SELECT 
    t.block_time,
    t.priority_fee_per_gas,
    b.base_fee_per_gas,
    CAST(t.priority_fee_per_gas AS DOUBLE) /
      NULLIF(CAST(b.base_fee_per_gas AS DOUBLE), 0)
      AS priority_to_base_ratio
  FROM base.transactions t
  INNER JOIN base.blocks b
    ON t.block_number = b.number
    AND t.block_date = b.date
  WHERE t.block_date >= CURRENT_DATE - INTERVAL '30' day
    AND b.base_fee_per_gas > 0
    AND t.priority_fee_per_gas IS NOT NULL
)
SELECT 
  AVG(priority_to_base_ratio) AS avg_priority_to_base_ratio,
  APPROX_PERCENTILE(priority_to_base_ratio, 0.5)
    AS median_priority_to_base_ratio,
  MIN(priority_to_base_ratio) AS min_ratio,
  MAX(priority_to_base_ratio) AS max_ratio
FROM tx_fees;
\end{lstlisting}

\bibliographystyle{plainurl}
\bibliography{refs}

\begin{thebibliography}{10}

\bibitem{blockbuilding}
Mikerah A. and Sarisht Wadhwa.
\newblock Block building is not just knapsack!, June 2024.
\newblock Ethereum Research forum post.
\newblock URL: \url{https://ethresear.ch/t/block-building-is-not-just-knapsack/19871}.

\bibitem{acilan2025transactionfeemarketdesign}
Bahar Acilan, Andrei Constantinescu, Lioba Heimbach, and Roger Wattenhofer.
\newblock Transaction fee market design for parallel execution, 2025.
\newblock URL: \url{https://arxiv.org/abs/2502.11964}, \href {https://arxiv.org/abs/2502.11964} {\path{arXiv:2502.11964}}.

\bibitem{aptos-ordering}
aptos foundation.
\newblock aip-27 · aptos-foundation/aips, 2022.
\newblock URL: \url{https://github.com/aptos-foundation/AIPs/blob/main/aips/aip-27.md}.

\bibitem{babel2025mysticeti}
Kushal Babel, Andrey Chursin, George Danezis, Anastasios Kichidis, Lefteris Kokoris{-}Kogias, Arun Koshy, Alberto Sonnino, and Mingwei Tian.
\newblock Mysticeti: Reaching the latency limits with uncertified dags.
\newblock In {\em 32nd Annual Network and Distributed System Security Symposium, {NDSS} 2025, San Diego, California, USA, February 24-28, 2025}. The Internet Society, 2025.
\newblock URL: \url{https://www.ndss-symposium.org/ndss-paper/mysticeti-reaching-the-latency-limits-with-uncertified-dags/}.

\bibitem{bahrani2024transaction}
Maryam Bahrani, Pranav Garimidi, and Tim Roughgarden.
\newblock Transaction fee mechanism design with active block producers, 2023.
\newblock URL: \url{https://arxiv.org/abs/2307.01686}, \href {https://arxiv.org/abs/2307.01686} {\path{arXiv:2307.01686}}.

\bibitem{amm}
Henry Berg and Todd~A. Proebsting.
\newblock Hanson's automated market maker.
\newblock {\em Journal of Prediction Markets}, 3(1):45--59, April 2009.
\newblock URL: \url{https://ideas.repec.org/a/buc/jpredm/v3y2009i1p45-59.html}.

\bibitem{chung2023foundations}
Hao Chung and Elaine Shi.
\newblock Foundations of transaction fee mechanism design, 2022.
\newblock URL: \url{https://arxiv.org/abs/2111.03151}, \href {https://arxiv.org/abs/2111.03151} {\path{arXiv:2111.03151}}.

\bibitem{diamandis2023designing}
Theo Diamandis, Alex Evans, Tarun Chitra, and Guillermo Angeris.
\newblock {Designing Multidimensional Blockchain Fee Markets}.
\newblock In {\em 5th Conference on Advances in Financial Technologies (AFT 2023)}, volume 282 of {\em Leibniz International Proceedings in Informatics (LIPIcs)}, pages 4:1--4:23. Schloss Dagstuhl -- Leibniz-Zentrum f{\"u}r Informatik, 2023.
\newblock \href {https://doi.org/10.4230/LIPIcs.AFT.2023.4} {\path{doi:10.4230/LIPIcs.AFT.2023.4}}.

\bibitem{DuneAnalytics}
Dune analytics.
\newblock URL: \url{https://dune.com/home}.

\bibitem{solanafoundation_2019}
Solana Foundation.
\newblock Sealevel - parallel processing thousands of smart contracts, Sep 2019.
\newblock URL: \url{https://solana.com/news/sealevel---parallel-processing-thousands-of-smart-contracts}.

\bibitem{suigas}
SUI Foundation.
\newblock Gas in sui | sui documentation, 2025.
\newblock URL: \url{https://docs.sui.io/concepts/tokenomics/gas-in-sui}.

\bibitem{gafni2026deterring}
Yotam Gafni.
\newblock Deterring a small collusion is all you need.
\newblock In {\em Proceedings of the ACM Web Conference 2026}, WWW '26, pages 31--39, New York, NY, USA, April 2026. Association for Computing Machinery.
\newblock \href {https://doi.org/10.1145/3774904.3792121} {\path{doi:10.1145/3774904.3792121}}.

\bibitem{gafni2022greedy}
Yotam Gafni and Aviv Yaish.
\newblock Greedy transaction fee mechanisms for (non-)myopic miners, 2022.
\newblock URL: \url{https://arxiv.org/abs/2210.07793v4}.

\bibitem{gafni2024barriers}
Yotam Gafni and Aviv Yaish.
\newblock Barriers to collusion-resistant transaction fee mechanisms.
\newblock In {\em Proceedings of the 25th ACM Conference on Economics and Computation}, EC '24, page 1074–1096, New York, NY, USA, 2024. Association for Computing Machinery.
\newblock \href {https://doi.org/10.1145/3670865.3673469} {\path{doi:10.1145/3670865.3673469}}.

\bibitem{gafni2024discrete}
Yotam Gafni and Aviv Yaish.
\newblock Discrete and bayesian transaction fee mechanisms.
\newblock In Stefanos Leonardos, Elise Alfieri, William~J. Knottenbelt, and Panos Pardalos, editors, {\em Mathematical Research for Blockchain Economy}, pages 145--171, Cham, 2024. Springer Nature Switzerland.

\bibitem{gafni2024scheduling}
Yotam Gafni and Aviv Yaish.
\newblock Scheduling with time discounts, 2024.
\newblock \href {https://doi.org/10.48550/arXiv.2402.08549} {\path{doi:10.48550/arXiv.2402.08549}}.

\bibitem{gafni2026transaction}
Yotam Gafni and Aviv Yaish.
\newblock Transaction fee mechanisms robust to welfare-increasing collusion.
\newblock {\em Games and Economic Behavior}, 157:351--375, March 2026.
\newblock \href {https://doi.org/10.1016/j.geb.2026.02.005} {\path{doi:10.1016/j.geb.2026.02.005}}.

\bibitem{garimidi2025transaction}
Pranav Garimidi, Lioba Heimbach, and Tim Roughgarden.
\newblock Transaction fee mechanism design for leaderless blockchain protocols, 2025.
\newblock URL: \url{https://arxiv.org/abs/2505.17885}, \href {https://arxiv.org/abs/2505.17885} {\path{arXiv:2505.17885}}.

\bibitem{greenlaw1995limits}
Raymond Greenlaw, H~James Hoover, and Walter~L Ruzzo.
\newblock {\em Limits to parallel computation: P-completeness theory}.
\newblock Oxford University Press on Demand, Oxford, UK, 1995.
\newblock \href {https://doi.org/10.1093/oso/9780195085914.001.0001} {\path{doi:10.1093/oso/9780195085914.001.0001}}.

\bibitem{heimbach2024defi}
Lioba Heimbach, Quentin Kniep, Yann Vonlanthen, and Roger Wattenhofer.
\newblock Defi and nfts hinder blockchain scalability.
\newblock In Foteini Baldimtsi and Christian Cachin, editors, {\em Financial Cryptography and Data Security}, pages 291--309, Cham, 2024. Springer Nature Switzerland.

\bibitem{heimbach2025dissecting}
Lioba Heimbach, Quentin Kniep, Yann Vonlanthen, Roger Wattenhofer, and Patrick Z{\"u}st.
\newblock Dissecting the~eip-2930 optional access lists.
\newblock In Jeremy Clark and Elaine Shi, editors, {\em Financial Cryptography and Data Security}, pages 292--302, Cham, 2025. Springer Nature Switzerland.
\newblock \href {https://doi.org/10.1007/978-3-031-78676-1_16} {\path{doi:10.1007/978-3-031-78676-1_16}}.

\bibitem{kiayias2025one}
Aggelos Kiayias, Elias Koutsoupias, Giorgos Panagiotakos, and Kyriaki Zioga.
\newblock One-dimensional vs. {Multi}-dimensional {Pricing} in {Blockchain} {Protocols}, June 2025.
\newblock arXiv:2506.13271 [cs].
\newblock \href {https://doi.org/10.48550/arXiv.2506.13271} {\path{doi:10.48550/arXiv.2506.13271}}.

\bibitem{lavee2025does}
Nir Lavee, Noam Nisan, Mallesh Pai, and Max Resnick.
\newblock Does {Your} {Blockchain} {Need} {Multidimensional} {Transaction} {Fees}?, April 2025.
\newblock URL: \url{http://arxiv.org/abs/2504.15438}, \href {https://doi.org/10.48550/arXiv.2504.15438} {\path{doi:10.48550/arXiv.2504.15438}}.

\bibitem{mohan2024blockknapsack}
Vijay Mohan and Peyman Khezr.
\newblock Blockchains, mev and the knapsack problem: A primer, 2024.
\newblock URL: \url{https://arxiv.org/abs/2403.19077}, \href {https://arxiv.org/abs/2403.19077} {\path{arXiv:2403.19077}}.

\bibitem{MonadGas}
Monad.
\newblock Gas on monad | monad developer documentation.
\newblock URL: \url{https://docs.monad.xyz/developer-essentials/gas-on-monad#gas-price-detail}.

\bibitem{Monad}
Monad.
\newblock Parallel execution | monad developer documentation.
\newblock URL: \url{https://docs.monad.xyz/monad-arch/execution/parallel-execution}.

\bibitem{ponnapalli2021rainblock}
Soujanya Ponnapalli, Aashaka Shah, Souvik Banerjee, Dahlia Malkhi, Amy Tai, Vijay Chidambaram, and Michael Wei.
\newblock {RainBlock}: Faster transaction processing in public blockchains.
\newblock In {\em 2021 USENIX Annual Technical Conference (USENIX ATC 21)}, pages 333--347. USENIX Association, July 2021.
\newblock URL: \url{https://www.usenix.org/conference/atc21/presentation/ponnapalli}.

\bibitem{eip1559}
Ethereum~Improvement Proposals.
\newblock Eip-1559: Fee market change for eth 1.0 chain, April 2019.
\newblock URL: \url{https://eips.ethereum.org/EIPS/eip-1559}.

\bibitem{roughgarden2024transaction}
Tim Roughgarden.
\newblock Transaction fee mechanism design.
\newblock {\em J. ACM}, 71(4), August 2024.
\newblock \href {https://doi.org/10/g9z6nw} {\path{doi:10/g9z6nw}}.

\bibitem{spiegelman2022bullshark}
Alexander Spiegelman, Neil Giridhaman, Alberto Sonnino, and Lefteris Kokoris-Kogias.
\newblock Bullshark: Dag bft protocols made practical.
\newblock In {\em Proceedings of the 2022 ACM SIGSAC Conference on Computer and Communications Security}, 2022.

\bibitem{sui2025object}
Sui.
\newblock Object {Model} {\textbar} {Sui} {Documentation}, 2025.
\newblock URL: \url{https://docs.sui.io/concepts/object-model}.

\bibitem{wahrstatter2025execution}
Anton Wahrst{\"a}tter.
\newblock Execution dependencies, April 2025.
\newblock URL: \url{https://ethresear.ch/t/execution-dependencies/22150}.

\bibitem{AccessLists}
Toni Wahrstätter, Dankrad Feist, Francesco D`Amato, Jochem Brouwer, Ignacio Hagopian, Felipe Selmo, Rahul, and Stefan, 2025.
\newblock URL: \url{https://eips.ethereum.org/EIPS/eip-7928}.

\bibitem{yaish2024speculative}
Aviv Yaish, Kaihua Qin, Liyi Zhou, Aviv Zohar, and Arthur Gervais.
\newblock Speculative denial-of-service attacks in ethereum.
\newblock In {\em 33rd USENIX Security Symposium (USENIX Security 24)}, USENIXSEC '24, pages 3531--3548, Philadelphia, PA, 8 2024. USENIX Association.
\newblock URL: \url{https://www.usenix.org/conference/usenixsecurity24/presentation/yaish}.

\end{thebibliography}

\section{Convergence Attacks}
While a recent elegant work has shown that multidimensional fee mechanisms may be slower to converge to a steady state~\cite{kiayias2025one}, the proof relies on a very strict assumption.
Intuitively, the assumption boils down to considering very different convergence time distributions for single- and multi- dimensional mechanisms.
In more detail, it is assumed that the difference between the distribution of convergence times of a single dimensional mechanism and of any given single resource of a multidimensional mechanism is upper-bounded by a single parameter.
Thus, given enough resources, one is likely to sample a ``bad'' relative convergence time.

Now, we present a concrete counter-example highlighting the issues with the result of~\cite{kiayias2025one}.
\begin{proposition}
    A sequence of transactions representing stable demand for each individual resource induces divergent and oscillatory behavior for a single-dimensional mechanism, while a multidimensional mechanism remains stable.
\end{proposition}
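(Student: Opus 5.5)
The plan is to build an explicit counter-example with two resources, namely two objects $\object_1,\object_2$. Both mechanisms will use the multiplicative EIP-1559-style update from \cref{sec:tfm-design-space}. The multidimensional mechanism keeps separate prices $p_{\object_1,\block},p_{\object_2,\block}$ with targets $\usage^\star_{\object_1}=\usage^\star_{\object_2}=\usage^\star$. The single-dimensional mechanism keeps one price $p_\block$ on aggregate usage $\usage_{\block}=\usage_{\object_1,\block}+\usage_{\object_2,\block}$, with target $2\usage^\star$. I will make ``stable demand for each individual resource'' precise as follows: the demand curve for each resource is fixed over time, and at the equilibrium price $p^\star_\object$ the realized usage of $\object$ exactly meets its target. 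I will take the two resources to have different equilibrium prices (say $p^\star_{\object_2}=k\,p^\star_{\object_1}$ with $k>1$) and steep, e.g.\ step-like, demand. Concretely, users of $\object_i$ submit $\usage^\star$ units whenever the price is at most $p^\star_{\object_i}$. They submit $2\usage^\star$ units if the price is below $p^\star_{\object_i}/c$, and nothing if it is above $p^\star_{\object_i}$.

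The multidimensional half is immediate. I will start each $p_{\object_i,0}=p^\star_{\object_i}$. Then $\usage_{\object_i,\block}=\usage^\star$ in every block, so the exponent in the update is $0$ and prices are constant for all $\block$. Hence the multidimensional mechanism is at a fixed point and remains stable.

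For the single-dimensional half, I will track the aggregate price $p_\block$ and show it never settles. When $p_\block\le p^\star_{\object_1}$, both demands are served. If $p_\block$ is also below $p^\star_{\object_1}/c$, demand is over target, so by the update $p_{\block+1}>p_\block$. When $p^\star_{\object_1}<p_\block\le p^\star_{\object_2}$, only $\object_2$ demand is present at $\usage^\star$. This is below target $2\usage^\star$, and since $p_\block> p^\star_{\object_1}$ the price keeps falling. When $p_\block\in(p^\star_{\object_1}/c,\,p^\star_{\object_1}]$, usage equals exactly $2\usage^\star$ only on that interval. I will therefore choose $c$ close to $1$ and $\eta$ large enough, e.g.\ $\eta>\ln c$, so that one multiplicative step always jumps over that window. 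With this choice the sequence $\ln p_\block$ performs a deterministic walk. The increments are $+\eta$ below the window and $-\eta/2$ above it, and the walk can never land on the window. This yields a bounded non-convergent orbit: oscillation between the regions around $p^\star_{\object_1}$. Meanwhile the aggregate usage alternates between $4\usage^\star$ (or $3\usage^\star$) and $\usage^\star$, i.e.\ divergent utilization from the target.

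The main obstacle is the single-dimensional step. I must guarantee that the orbit genuinely oscillates rather than eventually hitting the narrow stable window. I would first try to handle this by choosing parameters so the window has log-width smaller than the minimal step, $\ln c<\eta/2$, which immediately rules out landing in it. If that fails, for instance because the paper's linearized update is used instead, I would instead choose demand with no price at which aggregate usage equals $2\usage^\star$ at all. An example is $\object_1$ demand jumping from $0$ to $3\usage^\star$ at $p^\star_{\object_1}$, which makes the single-dimensional mechanism have no fixed point and forces perpetual oscillation. I would then remark that this contradicts the bounded-distribution assumption of \cite{kiayias2025one}.
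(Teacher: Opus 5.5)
Your single-dimensional analysis misreads your own demand curves, and the claimed perpetual oscillation does not hold. With $p^\star_{o_2}=k\,p^\star_{o_1}$, $k>1$ and $c$ close to $1$ (so $c<k$), consider any price $p\in(p^\star_{o_1},\,p^\star_{o_2}/c)$. It prices out the $o_1$ users, but it is still below $p^\star_{o_2}/c$, so the $o_2$ users submit $2\mathcal{U}^\star$. Aggregate usage therefore equals the target $2\mathcal{U}^\star$, and the price is at rest on this whole interval. You instead gave $o_2$ demand $\mathcal{U}^\star$ on all of $(p^\star_{o_1},p^\star_{o_2}]$. You also placed the window at $(p^\star_{o_1}/c,p^\star_{o_1}]$, where usage is in fact $\mathcal{U}^\star+2\mathcal{U}^\star=3\mathcal{U}^\star$.

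The true window has log-width $\ln(k/c)$, which is unrelated to $\ln c$. Below it every step of $\ln p_b$ is positive, above it every step is negative, and no step exceeds $\eta$ in magnitude. Hence whenever $\eta<\ln(k/c)$, every orbit enters the window in finite time and stops. Even with your own choice $\ln c<\eta/2$, starting from $p^\star_{o_2}$ the price lands in the window after one step of $-\eta/2$ as soon as $\eta/2<\ln k$.

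This is structural, not a parameter accident. For monotone demands, aggregate demand is at least $2\mathcal{U}^\star$ at $p^\star_{o_1}$ and at most $2\mathcal{U}^\star$ at $p^\star_{o_2}$. So a common clearing price exists unless the aggregate curve \emph{jumps over} $2\mathcal{U}^\star$. Once the single target is the sum of the per-resource targets, heterogeneous equilibrium prices alone create no instability.

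Separately, a window narrower than the minimal step does not ``immediately rule out landing in it.'' A large step lets the orbit jump over the window but does not prevent landing in it. For example, from $x$ below a narrow log-window $[a,b]$ with upward step $\eta/2$, the orbit lands whenever $x\in[a-\eta/2,\,b-\eta/2]$. At best you would get oscillation for selected initial prices.

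The fallback does not repair this. If $o_1$ demand jumps from $0$ to $3\mathcal{U}^\star$ at $p^\star_{o_1}$, no price makes $o_1$'s usage equal its target. That breaks the multidimensional half as you defined it (``usage meets target at $p^\star_o$''). The single-dimensional window also survives, since $o_2$ alone still supplies $2\mathcal{U}^\star$ on $(p^\star_{o_1},p^\star_{o_2}/c)$.

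What is missing is a demand profile in which each resource has a plateau exactly at its own target while the aggregate never equals $2\mathcal{U}^\star$. For instance, let $o_i$ demand $3\mathcal{U}^\star$ below $a_i$, $\mathcal{U}^\star$ on $[a_i,b_i]$, and $0$ above $b_i$, with $b_1<a_2$.
\begin{itemize}
\item Aggregate usage then takes only the values $6,4,3,1,0$ (in units of $\mathcal{U}^\star$). Every step of $\ln p_b$ has magnitude at least $\eta/2$, so the single-dimensional price cannot converge from any start; it keeps oscillating around $a_2$.
\item If $\ln(b_i/a_i)\ge 2\eta$, no per-resource step can jump a plateau, so each multidimensional price settles from any start.
\end{itemize}
This condition also fixes a weakness in your version. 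With $\eta>\ln c$, each of your per-resource walks (steps $\pm\eta$ around a plateau of log-width $\ln c$) also admits $2$-cycles. Your multidimensional mechanism is stable only because you initialize it exactly at equilibrium.

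For comparison, the paper avoids price-responsive demand entirely. It fixes demand equal to each resource's target and sets the single-dimensional analog's limit to the smallest per-resource limit. The summed demand then exceeds the single-dimensional target, while every per-resource price sits at its fixed point. Your normalization (single target equal to the sum of targets) removes exactly that source of mismatch, which is why your route needs the more delicate construction.
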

\begin{proof}
    We begin by fixing some multidimensional fee mechanism.
    The most reasonable single-dimensional ``analog'' for the multidimensional mechanism would set its gas limit to equal the lowest limit of any single resource of the multidimensional mechanism.
    To see why, note that a lower limit in the single-dimensional case does not make full use of the block-time, and higher limits result in easy-to-execute attacks.
    
    Next, consider a sequence of transactions such that for each block, the mempool receives a number of transactions which equals the number of resources.
    Each of these transactions consumes exactly a single resource, where the amount that is consumed equals exactly the ``target'' amount of the resource.
    In the multidimensional world, this means that the prices would remain stable.
    The multidimensional case is thus trivially faster to converge.
\end{proof}

Moreover, it is unclear if convergence time is an appropriate benchmark.
In fact, we devise a sequence of transactions where the ``bad'' outcome is for a mechanism to remain stable, as that indicates ignorance of the market's demand dynamics.
Then, while a multidimensional mechanism would constantly be in flux, this is the \emph{desired} behavior, stemming from a reaction to fine-grained changes in demand which the single-dimensional mechanism is ignorant of.

\begin{example}
    Consider two resources with a target of $\frac{G}{2}$ and a limit of $G$, and 3 transactions.
    For $i\in\{1,2\}$, let $tx_i$ consume $\frac{G}{4} - \frac{\epsilon}{2}$ of $r_i$, and let $tx_3$ consume $\frac{\epsilon}{2}$ of each of $r_1, r_2$ (i.e., a total of $\frac{\epsilon}{2}$ is consumed).
    Moreover, $tx_1, tx_2$ find the two resources to be perfect substitutes, while $tx_3$ does not (e.g., consider two AMMs for the same token pair, an arbitrageur needs both, but simple traders with a large enough price-impact tolerance may not really have a preference).
    A single-dimensional mechanism would see a constant demand that equals the target ($\frac{G}{2}$), and so would remain stable.
    However, the multi-dimensional mechanism would understand that resources are under-utilized, and thus would lower fees.
\end{example}

At a high level, understanding how markets react to multidimensional fee mechanisms and the corresponding impact on convergence times highlight an interesting question for future work to answer empirically.
Ethereum's recent adoption of EIP-4844 could prove useful to such an endeavor;
In particular, it seems like the variance in transaction fees has become lower since March 13, 2024, when the Dencun upgrade activated EIP-4844 on Ethereum's mainnet.

\section{Priority Schedules}
\label{sec:priorityschedules}
We now analyze currently deployed priority-based TFM schedules under parallel execution.
We show that these mechanisms exploit parallelism sub-optimally.
That is because in the worst case, their \emph{weighted throughput} (total fee collected) is the same as a linear schedule.
We focus on two classes, \greedy (fee-ordered partial orders over conflicts) and \random (randomized ordering), and prove that in the worst case the weighted throughput of each is bounded by the number of threads available for parallel execution.

Transaction fees play a central role in determining the order and timing of transaction execution.  
The notion of schedulers that prioritize transactions based on some measure of ``importance'' is not new \cite{gafni2022greedy,gafni2024scheduling}.
Indeed, every major blockchain platform employs some variant of a \emph{priority scheduler}, where transactions offering higher fees (or utility) are executed before those with lower ones.  
This mechanism reflects the rational principle that users willing to pay more should receive faster and more reliable inclusion.

Block-building based entirely on fees has often been compared to the well-studied \emph{knapsack problem}, where the goal is to maximize total fee revenue under a limited computational or gas budget~\cite{blockbuilding,mohan2024blockknapsack}.
Historically, however, the primary bottleneck in blockchain systems was \emph{consensus latency} rather than \emph{execution throughput}.  
Consequently, a greedy block-building heuristic (selecting transactions in decreasing order of fee per gas and executing them sequentially) became a natural choice, as in Bitcoin and Ethereum.
We refer to such linear, fee-ordered scheduling as the \emph{linear design}. 

However, for modern blockchains like Solana, SUI, Aptos, Monad, the execution has become the bottleneck. Partial orders allow non-conflicting transactions to run concurrently while respecting fee-ordered precedence among conflicts. A representative instance is Sui's fee-ordered partial order over shared objects~\cite{suigas}, built with a fast consensus (Mysticeti~\cite{babel2025mysticeti}). Transactions declare object usage (via Move Language~\cite{sui2025object}), enabling a fee-ordered DAG that preserves output equivalence while attempting parallel execution.
This \greedy policy seems rational as higher fee implies higher priority \cite{gafni2022greedy}, but, can be revenue-suboptimal under congestion because early high-fee conflicts can block parallel opportunities elsewhere.

\begin{example}
\label{ex:suisuboptimal}
Consider $\{\tx_1,\tx_2,\tx_3,\tx_4\}$ with
$\tx_1 \simeq (200,4,\{\object_1,\object_3\})$,
$\tx_2 \simeq (150,3,\{\object_1,\object_2\})$,
$\tx_3 \simeq (100,2,\{\object_2\})$,
$\tx_4 \simeq (100,1,\{\object_3\})$,
all writes, and block computation limit $\gaslimit=400$.
Under \greedy (fee-ordered DAG), precedence edges $\tx_1\!\to\!\tx_2$, $\tx_2\!\to\!\tx_3$, and $\tx_1\!\to\!\tx_4$ yield
$
  \tx_1 \rightarrow (\tx_2,\tx_4) \rightarrow \tx_3,
$
which exceeds $\gaslimit$, dropping $\tx_3$.
Yet, an alternative valid schedule
$
  (\tx_1,\tx_3) \rightarrow (\tx_2,\tx_4)
$
respects $\gaslimit$ and admits all four, strictly improving revenue.
\cref{fig:sui-suboptimal} illustrates the three schedules (linear, \greedy, parallel optimum) and their object usage.
\end{example}
The example demonstrates that fee-ordered schedules can waste available parallel compute which could have executed lower priority non-conflicting transactions.

\begin{figure*}
\centering
\begin{subfigure}[b]{0.28\linewidth}
\centering
\begin{tikzpicture}[
  tx1/.style={draw, rectangle, fill=green!20, minimum width=1.2cm, minimum height=0.5cm, font=\footnotesize},
  tx2/.style={draw, rectangle, fill=yellow!30, minimum width=1.2cm, minimum height=0.5cm, font=\footnotesize},
  tx3/.style={draw, rectangle, fill=blue!20, minimum width=1.2cm, minimum height=0.5cm, font=\footnotesize},
  tx4/.style={draw, rectangle, fill=red!20, minimum width=1.2cm, minimum height=0.5cm, font=\footnotesize},
  every node/.style={font=\footnotesize},
  ->, >=Latex
]

\node[tx1] (tx1) {$\tx_1$};
\node[tx2, below =0.5cm of tx1] (tx2) {$\tx_2$};
\node[tx3, below =0.5cm of tx2] (tx3) {$\tx_3$};
\node[tx4, below =0.5cm of tx3] (tx4) {$\tx_4$};

\draw (tx1) -- (tx2);
\draw (tx2) -- (tx3);
\draw (tx3) -- (tx4);

\node[below=0.4cm of tx4] {Linear Schedule ($\tx_3, \tx_4$ dropped)};
\end{tikzpicture}
\caption{Linear fee-based order}

\vspace{0.5em}
\begin{tabular}{|>{\centering\arraybackslash}m{1.2cm}|
                >{\centering\arraybackslash}m{1.2cm}|
                >{\centering\arraybackslash}m{1.2cm}|}
\hline
\rowcolor{gray!30}
$\object_1$ & $\object_2$ & $\object_3$ \\
\hline
\cellcolor{green!20} &  & \cellcolor{green!20} \\
\cellcolor{yellow!30} & \cellcolor{yellow!30} &  \\
\hline
\end{tabular}
\end{subfigure}
\hfill
\begin{subfigure}[b]{0.28\linewidth}
\centering
\begin{tikzpicture}[
  tx1/.style={draw, rectangle, fill=green!20, minimum width=1.2cm, minimum height=0.5cm, font=\footnotesize},
  tx2/.style={draw, rectangle, fill=yellow!30, minimum width=1.2cm, minimum height=0.5cm, font=\footnotesize},
  tx3/.style={draw, rectangle, fill=blue!20, minimum width=1.2cm, minimum height=0.5cm, font=\footnotesize},
  tx4/.style={draw, rectangle, fill=red!20, minimum width=1.2cm, minimum height=0.5cm, font=\footnotesize},
  every node/.style={font=\footnotesize},
  ->, >=Latex
]

\node[tx1] (tx1) {$\tx_1$};
\node[tx2, below left=0.4cm and 0.3cm of tx1] (tx2) {$\tx_2$};
\node[tx4, below right=0.4cm and 0.3cm of tx1] (tx4) {$\tx_4$};
\node[tx3, below=0.5cm of tx2] (tx3) {$\tx_3$};

\draw (tx1) -- (tx2);
\draw (tx1) -- (tx4);
\draw (tx2) -- (tx3);

\node[below=3cm of tx1] {Fee-Based DAG Schedule ($\tx_3$ dropped)};
\end{tikzpicture}
\caption{\greedy}

\vspace{0.5em}
\begin{tabular}{|>{\centering\arraybackslash}m{1.2cm}|
                >{\centering\arraybackslash}m{1.2cm}|
                >{\centering\arraybackslash}m{1.2cm}|}
\hline
\rowcolor{gray!30}
$\object_1$ & $\object_2$ & $\object_3$ \\
\hline
\cellcolor{green!20} &  & \cellcolor{green!20} \\
\cellcolor{yellow!30} & \cellcolor{yellow!30} &  \cellcolor{red!20}\\
\hline
\end{tabular}
\end{subfigure}
\hfill
\begin{subfigure}[b]{0.28\linewidth}
\centering
\begin{tikzpicture}[
  tx1/.style={draw, rectangle, fill=green!20, minimum width=1.2cm, minimum height=0.8cm, font=\footnotesize},
  tx2/.style={draw, rectangle, fill=yellow!30, minimum width=1.2cm, minimum height=0.8cm, font=\footnotesize},
  tx3/.style={draw, rectangle, fill=blue!20, minimum width=1.2cm, minimum height=0.8cm, font=\footnotesize},
  tx4/.style={draw, rectangle, fill=red!20, minimum width=1.2cm, minimum height=0.8cm, font=\footnotesize},
  every node/.style={font=\footnotesize},
  ->, >=Latex
]

\node[tx1] (tx1) {$\tx_1$};
\node[tx3, right=.5cm of tx1] (tx3) {$\tx_3$};
\node[tx2, below=.5cm of tx1] (tx2) {$\tx_2$};
\node[tx4, right=.5cm of tx2] (tx4) {$\tx_4$};

\node[below left=0.4cm and -2cm of tx4] {Alternative parallel schedule};
\node[below left=0.9cm and -1.5cm of tx4]
{(Without ordering)};
\end{tikzpicture}
\caption{Optimal parallel order}

\vspace{0.5em}
\begin{tabular}{|>{\centering\arraybackslash}m{1.2cm}|
                >{\centering\arraybackslash}m{1.2cm}|
                >{\centering\arraybackslash}m{1.2cm}|}
\hline
\rowcolor{gray!30}
$\object_1$ & $\object_2$ & $\object_3$ \\
\hline
\cellcolor{green!20} & \cellcolor{blue!20} & \cellcolor{green!20} \\
\cellcolor{yellow!30} & \cellcolor{yellow!30} & \cellcolor{red!20} \\
\hline
\end{tabular}
\end{subfigure}

\caption{Comparison of scheduling strategies and their object usage.}
\label{fig:sui-suboptimal}
\end{figure*}
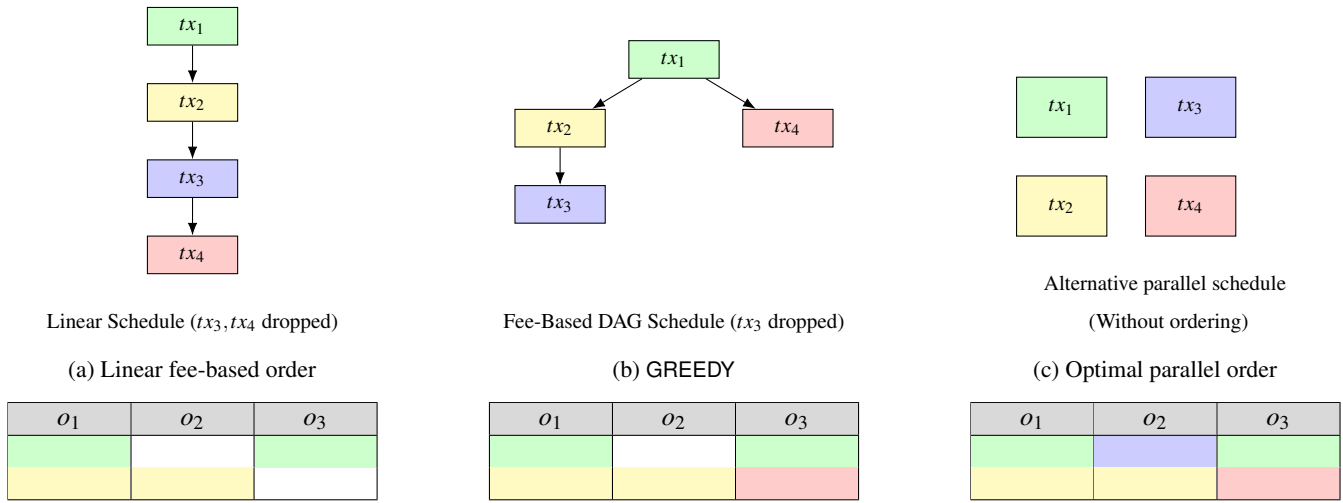

For each schedule $\priorityschedules$, define $\ratio(\priorityschedules)$ as above. With an $\varepsilon\!\sim\!0$ transaction priced at $\sim 1+\varepsilon$ and a transaction with compute $\gaslimit$ priced at $1$, the following lower bound demonstrates that \greedy can be arbitrarily poor:
\[
  \ratio(\greedy)=\frac{\varepsilon\cdot(1+\varepsilon)}{\gaslimit}\approx 0
  .
\]
Fortunately, in practice, systems cap transaction size at $\txcomplimit$, leading to the following worst-case guarantee.

\begin{lemma}
\label{lem:greedylowerbound}
Given $\numcores$ parallel cores and $\numobjects$ objects,
\[
  \ratio(\greedy) \;=\; \frac{\gaslimit-\txcomplimit}{\gaslimit\cdot\min(\numcores,\numobjects)},
\]
and the lower bound is achievable in the limit.
\end{lemma}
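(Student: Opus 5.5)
The plan is to read $\ratio(\priorityschedules)$ as the worst-case ratio, over all transaction sets $\consensusset$, between the fee revenue (weighted throughput) of schedule $\priorityschedules$ and the revenue of an optimal parallel schedule \opt under the same block compute limit $\gaslimit$. I would prove two inequalities: a lower bound $\ratio(\greedy)\ge \frac{\gaslimit-\txcomplimit}{\gaslimit\cdot\min(\numcores,\numobjects)}$ valid for every input, and a family of instances whose ratio converges to this value. Throughout I assume, as the paper states, that every transaction has compute at most $\txcomplimit$.

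\textbf{Step 1 (upper bound on \opt).} At any instant, \opt runs at most $\numcores$ transactions, one per core. I also assume every transaction writes at least one object, so two transactions running at the same instant cannot share a written object; this bounds the number of concurrent transactions by $\numobjects$. Integrating over the makespan, which is at most $\gaslimit$, the total compute scheduled by \opt is at most $\gaslimit\cdot\min(\numcores,\numobjects)$. Hence the revenue of \opt is at most the fees of the highest-priced compute units totalling $\gaslimit\cdot\min(\numcores,\numobjects)$.

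\textbf{Step 2 (lower bound on \greedy).} \greedy inserts transactions in decreasing order of price per unit into the fee-ordered DAG. It stops, or drops a transaction, only when adding it would push the critical path beyond $\gaslimit$. Suppose \greedy drops something. The dropped transaction has compute at most $\txcomplimit$, so the critical path of the included transactions exceeds $\gaslimit-\txcomplimit$. In particular, the included transactions carry at least $\gaslimit-\txcomplimit$ units of compute. These are the highest-priced units in the order \greedy considered.

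\textbf{Step 3 (combining the bounds).} Consider the top $\gaslimit-\txcomplimit$ price-ordered units out of at most $\gaslimit\cdot\min(\numcores,\numobjects)$ units. They carry at least a $\frac{\gaslimit-\txcomplimit}{\gaslimit\min(\numcores,\numobjects)}$ fraction of the total value of those units, because the average price of a top prefix is at least the overall average. This gives the claimed inequality. If \greedy drops nothing, it includes everything and its ratio is $1$.

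\textbf{Step 4 (tightness).} Build a chain of transactions on a single object $\object_1$, each priced $1+\varepsilon$, with total compute just above $\gaslimit-\txcomplimit$. Follow it with one transaction of compute $\txcomplimit$, priced slightly above $1$, that also touches $\object_1$, so it does not fit. Then add many price-$1$ transactions spread over $\min(\numcores,\numobjects)$ distinct idle objects; each such column has total compute $\gaslimit$. \greedy commits to the chain first and, by the fee-ordered precedence of Example~\ref{ex:suisuboptimal}, blocks the remaining parallel capacity. \opt instead fills all columns. Letting $\varepsilon\to0$ and the granularity go to zero, the ratio tends to the bound.

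\textbf{Main obstacle.} The hardest part is Step 2. I must show that the structural behaviour of the fee-ordered DAG really forces the stopping condition described above, and that it does not leave later, fitting, high-value transactions unscheduled in a way that breaks the top-prefix comparison in Step 3. My first attempt would be to formalize \greedy as ``consider transactions in fee order, keep one if the resulting DAG critical path is at most $\gaslimit$.'' I would then argue that every skipped transaction is priced below every kept unit counted in the bound.

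Step 4 has a related difficulty. The price-$1$ transactions, inserted later in fee order, would by that same formalization be kept whenever they fit, even on idle objects. So I still need to ensure they are blocked or dropped, for example by having them share $\object_1$ in read mode or by relying on \greedy's precedence rule; if neither works, the construction needs adjusting.
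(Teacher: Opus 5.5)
\textbf{The gap is in Step 4.} It is the worry you raise at the end, and neither of your proposed patches closes it.

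In the fee-ordered DAG, precedence edges join only conflicting transactions. Your price-$1$ fillers on idle objects therefore have no predecessors, so \greedy (also under your ``keep it if it fits'' formalization) starts them at time $0$ and collects essentially what \opt collects. The compute-$\txcomplimit$ transaction on $\object_1$ constrains nothing: it touches only $\object_1$ and is itself dropped.

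Letting the fillers read $\object_1$ does push them behind the chain in \greedy, but it also stops \opt from running them beside the chain. Under your Step 1 assumption each filler column must write its own object other than $\object_1$. So \opt keeps only $\min(\numcores,\numobjects-1)$ columns, and this matches the bound only when $\numcores<\numobjects$.

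The fillers must also have compute exactly $\txcomplimit$ (with $\txcomplimit$ dividing $\gaslimit$). If their granularity goes to zero, they fit into the leftover window of length nearly $\txcomplimit$ after the chain, adding about $\txcomplimit$ per column to \greedy.

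The missing idea is a \emph{barrier}: a transaction $\tx\simeq(\varepsilon,\setobjects)$ of negligible compute that touches every object. Price it strictly between the chain ($1+\varepsilon$) and the fillers ($1-\varepsilon$), and let the chain sum to exactly $\gaslimit-\txcomplimit$ so the barrier fits and is included. Because the barrier conflicts with everything, every filler must follow it in the DAG, and no compute-$\txcomplimit$ filler can finish by $\gaslimit$. \opt simply omits the barrier, which is worth only $\varepsilon$, and fills all $\min(\numcores,\numobjects)$ columns, topping up the $\object_1$ column with one filler. This is the paper's instance, and it gives the limit.

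\textbf{Steps 1--3 are salvageable.} Once repaired they give a cleaner argument than the paper's, which normalizes the marginal price to $1$ and bounds \opt by \greedy's revenue plus $\txcomplimit+\gaslimit(\min(\numcores,\numobjects)-1)$ units priced at most $1$.

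The repair is needed because it is false that all of \greedy's included units are the highest-priced ones. In Example~\ref{ex:suisuboptimal}, $\tx_3$ at price $2$ is dropped while $\tx_4$ at price $1$ is kept. Instead, pivot on the \emph{first} transaction $d$ in fee order that \greedy drops. Everything before $d$ is included, forms a fee-order prefix of the whole pool, and contains $d$'s predecessor chain, so its compute exceeds $\gaslimit-\txcomplimit$. If core saturation rather than conflicts delays $d$, you need a Graham-type list-scheduling argument in place of the critical path.

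Let $S_m$ be the sum of the $m$ highest unit prices in the pool and $k=\min(\numcores,\numobjects)$. Then $R_{\greedy}\ge S_{\gaslimit-\txcomplimit}\ge\frac{\gaslimit-\txcomplimit}{\gaslimit k}S_{\gaslimit k}\ge\frac{\gaslimit-\txcomplimit}{\gaslimit k}R_{\opt}$. The comparison must be against the top units of the pool, not of \opt's units, since \greedy's prefix need not lie inside \opt. With this pivot, transactions after $d$, kept or dropped, play no role, which dissolves the obstacle you flagged.
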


\begin{proof}

    Consider a single core machine, and enough transactions to fill the schedule in an optimal schedule. 
    Consider the last scheduled transaction in $\greedy$ as $\tx$. 
    To get relative revenue, consider that this transaction pays $1$ unit per computation used. 
    Thus, any transaction scheduled must pay a gas price $\geq 1$ and any transaction that fails to get scheduled pay a gas price $\leq 1$.
    If the computation used by the schedule is $\leq \gaslimit-\txcomplimit$, then another transaction can be scheduled after $\greedy$, which violates the definition of $\greedy$. 
    Thus, the computation used by the schedule is $> \gaslimit-\txcomplimit$, and the gas price is $\geq 1$.
    This implies that the revenue received is $> \gaslimit-\txcomplimit$.
    Adding multiple cores cannot lower the revenue received, so the revenue generated is $\revenue\greedy > \gaslimit-\txcomplimit$.

    The best possible revenue from \opt would be when the complete block is scheduled with as many transactions in parallel as possible. 
    This would include the revenue of the greedy schedule, and all parallel transactions without scheduling $\tx$. Thus, $\revenue\opt \leq \revenue\greedy+ \txcomplimit\cdot1 + (\gaslimit\cdot 1)\cdot(\text{number of parallel transactions - 1}) < \gaslimit \cdot\text{number of parallel transactions}$. The number of parallel transactions can be at most the number of cores, and also at most the number of objects (otherwise, by pigeon hole principle, two transactions in parallel would have to use the same object).
    Thus, $\text{number of parallel transactions} = \min(\numcores,\numobjects)$, and, \[\alpha(\greedy) = \frac{\revenue\greedy}{\revenue\opt} > \frac{\gaslimit-\txcomplimit}{\gaslimit \cdot \min(\numcores,\numobjects)}.\]

    Next, we show that this revenue lower bound is achievable.
    Consider a set of transactions $\consensusset_s = \{ \tx_i\}$ such that $\tx_i \simeq (\compunits_i, \{\object_1\})$, gas price is $1+\varepsilon$ for each of them and $\sum_i \compunits_i = \gaslimit-\txcomplimit$.
    Next, consider a transaction $\tx \simeq (\varepsilon, \setobjects)$, gas price is $1$. 
    Finally, consider another set of transactions  $\consensusset_n = \{\tx_j\}$ such that $\forall \object_j \in \setobjects: \tx_j \simeq (\txcomplimit, \{\object_j\})$ with a gas price of $1-\varepsilon$ and consider enough instances of each $\tx_j$ to fill the computation limits of the block. 
    If the partial order follows $\greedy$, then only transactions in $\consensusset_s$ and $\tx$ would be scheduled, leading to a revenue of:
    \[\revenue\greedy(\consensusset_s,\tx,\consensusset_n) = \overbrace{(1+\varepsilon)\cdot(\gaslimit-\txcomplimit)}^{\consensusset_s} + \overbrace{1\cdot\varepsilon}^{\tx}\]
    Considering \txcomplimit to be a factor of \gaslimit, the optimal revenue in this case would be:
    \begin{align*}
     \revenue\opt(\consensusset_s,\tx, &\consensusset_n) = 
      \overbrace{(1+\varepsilon)\cdot(\gaslimit-\txcomplimit)}^{\consensusset_s} + \\& \overbrace{(1+\varepsilon)\cdot\txcomplimit}^{\tx_1 \in \consensusset_n} + 
      \overbrace{(1-\varepsilon)\cdot \gaslimit \cdot (\min(\numcores, \numobjects)-1)}^{\forall \tx_j \in \consensusset_n, j\neq 1}
    \end{align*}
    Thus,
    $\lim_{\varepsilon \rightarrow 0^+} \ratio(\greedy) = \frac{\gaslimit-\txcomplimit}{\gaslimit\cdot (\min(\numcores, \numobjects)}.$
\end{proof}

\paragraph{Randomized Order.}
Many systems implement \random: e.g., Aptos performs a random shuffle (with anti-same-sender heuristics) after consensus~\cite{aptos-ordering}; in Monad~\cite{Monad}, consensus fixes order, however, no ordering policy is specified to the consensus.
This would yield a distribution of transactions that we treat as random for analysis.
Since \greedy is a valid instantiation of a random ordering, the worst case for \random cannot be better:

\begin{lemma}
The lower bound for $\ratio(\random)$ is at least as bad as the lower bound for $\ratio(\greedy)$.
\end{lemma}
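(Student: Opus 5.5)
The plan is to argue that the fee-ordered schedule produced by \greedy is one of the outcomes \random can produce, so any bad instance for \greedy is also bad for \random. Here $\ratio(\cdot)$ is the worst-case ratio of a policy's revenue to \opt's revenue over all transaction sets. I will treat \random as a distribution over orderings of $\consensusset$, each of which is turned into a schedule by the same conflict-respecting partial-order construction (precedence edges between conflicting transactions follow the sampled order). The fee-descending order is one such ordering, and feeding it to this construction yields exactly the \greedy schedule.

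First, I fix the tight instance $(\consensusset_s,\tx,\consensusset_n)$ from the proof of \cref{lem:greedylowerbound}. For \random, the relevant worst case is over instances and adversarial realizations of the ordering: no ordering policy is specified to consensus in Monad, and an adversarial or unlucky shuffle in Aptos is possible. Under that view, the realization equal to the fee-descending order is admissible. It produces $\revenue\random = \revenue\greedy = (1+\varepsilon)(\gaslimit-\txcomplimit)+\varepsilon$. Meanwhile $\revenue\opt$ is unchanged, because \opt depends only on the transaction set and not on the policy. Hence
\[\ratio(\random)\le \frac{\gaslimit-\txcomplimit}{\gaslimit\cdot\min(\numcores,\numobjects)}\]
in the limit $\varepsilon\to 0^+$.

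Second, if one insists on an expected-revenue notion for \random, I would modify the instance so that the bad ordering is forced regardless of the shuffle. I replace the blocker $\tx \simeq (\varepsilon,\setobjects)$ with many copies, so that with high probability some copy lands before the parallel transactions in $\consensusset_n$ and serializes the block on every object. The expected ratio then tends to the same bound. This is where I expect the main obstacle: bounding the probability that some parallel transaction in $\consensusset_n$ precedes every blocker and still fits. The approach I would try is to take $k$ blockers so that the ratio of $\consensusset_n$ transactions to blockers goes to zero. With that choice, the probability that any $\consensusset_n$ transaction escapes vanishes, and the conclusion follows since the worst case cannot exceed the \greedy bound.
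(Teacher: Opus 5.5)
Your first paragraph is the paper's proof. The fee-descending order is an admissible realization of \random, so on the tight instance of \cref{lem:greedylowerbound}, \random can output exactly the \greedy schedule while \opt is unchanged. Hence \random's worst case is no better than \greedy's, which already establishes the statement as the paper intends it.

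The expected-revenue extension is not needed, and as sketched it does not reach the bound you claim. Many blockers enforce \emph{serialization}, not \emph{exclusion}. In a random order the blockers are no longer preceded by the $\gaslimit-\txcomplimit$ units of $\consensusset_s$ work that made the instance bad for \greedy. So a $\consensusset_n$ transaction that lands after some blockers simply starts after them and is still scheduled as long as time remains. The event you single out, a $\consensusset_n$ transaction preceding every blocker, is therefore not the binding one.

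With $k$ large, the realized schedule is essentially a serial first-fit over the random order. Its revenue can approach $\gaslimit$, for example when $\consensusset_s$ consists of many small transactions that fill the final gap. The expected ratio then tends to about $1/\min(\numcores,\numobjects)$, which exceeds the \greedy bound by the factor $\gaslimit/(\gaslimit-\txcomplimit)$. To match the bound exactly in expectation, you would have to re-tune the instance so that the serial fill itself stalls near $\gaslimit-\txcomplimit$; bounding the probability that $\consensusset_n$ transactions escape all blockers does not achieve this.
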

\begin{proof}
Any \greedy output is feasible for \random, so \random's worst case is no better than \greedy's.
\end{proof}

\paragraph{Remark on Optimistic Execution (Monad).}
In optimistic execution, transactions start immediately after consensus without a precomputed schedule; conflicts trigger re-execution. If a fee-ordered consensus sequence places a large set $\consensusset_s$ of mutually conflicting transactions first, many cores can be kept busy executing work that later aborts, deferring useful parallelism. When early transactions in $\consensusset_s$ use $\varepsilon$ compute while others consume near $\txcomplimit$, the first successful commit delays re-execution by $\approx\txcomplimit$, potentially yielding revenue strictly below a simple \random baseline despite high core utilization (wasted on aborts). This phenomenon does not alter the proofs above; it illustrates that \emph{execution policy} can further degrade realized revenue beyond ordering alone.

\end{document}